%------
% This is a template file for typesetting papers to appear in
% Annales de l’Institut Henri Poincaré D (AIHPD).
%------
% Before you edit this file, please read the
% INSTRUCTIONS given in AIHPD_instructions.pdf
% at https://ems.press/journals/aihpd/submit
%------
\documentclass{article}
\usepackage[lang = british]{ems-aihpd} %% change to `american' if you use American English

%------
% Include here your personal symbol definitions
% and macros as well as any extra LaTeX packages
% you need. Do not include any commands/packages
% that alter the layout of the page, e.g. height/width.

\newcommand{\ud}{\mathrm{d}}
\newcommand{\Tr}{\,\mathrm{Tr}}
\def\smalllozenge{\mbox{\scriptsize$\lozenge$}}
\def\smallblacklozenge{\mbox{\scriptsize$\blacklozenge$}}
\unitlength 1mm

\makeatletter
 \let\old@rule\@rule
 \def\@rule[#1]#2#3{\textcolor{rulecolor}{\old@rule[#1]{#2}{#3}}}
 \definecolor{rulecolor}{named}{black}
\makeatother

\newtheoremstyle{mydefinition}{3pt}{3pt}{\sf}{}{\bf}{.}{.5em}{}
\theoremstyle{mydefinition}
\newtheorem{dfnt}{Definition}[section]

\theoremstyle{plain}

\newtheorem{thrm}[dfnt]{Theorem}
\newtheorem{prps}[dfnt]{Proposition}
\newtheorem{corl}[dfnt]{Corollary}

\theoremstyle{remark}
\newtheorem{rmk}[dfnt]{Remark}
\newtheorem{exm}[dfnt]{Example}

% To include the section number in the equation numbering:
%\numberwithin{equation}{section}

%%%%  WE REMOVE THE AIHP-D HEADING  %%%%
\makeatletter
\renewcommand\ps@titlepage{%
        \let\@oddhead\@empty
        \let\@evenfoot\@empty
        \def\@oddfoot{\hfill\thepage}
        \let\@evenhead\@oddhead
}
\makeatother
%%%%%%%%%%%%%%%%%%%%%

\begin{document}
\sloppy
\vspace*{-1.9cm}

%------
% Insert the title of your paper and (if necessary)
% a short title for the running head.
%------
\title{Nested Catalan tables and a recurrence relation in noncommutative 
quantum field theory}
\titlemark{Nested Catalan tables and a recurrence relation in noncommutative 
QFT}

% ------
\emsauthor{1}{Jins de Jong}{J.~de Jong}
\emsauthor{2}{Alexander Hock}{A.~Hock}
\emsauthor{3}{Raimar Wulkenhaar}{R. Wulkenhaar}

%------

%------
% Add one \emsaffil and one \email for each author.
%------
\emsaffil{1}{TNO, Postbus 1416, 9701 BK Groningen, The Netherlands
  \email{jins.dejong@tno.nl}}
\emsaffil{2}{Mathematical Institute -- Andrew Wiles Building,
  University of Oxford, Woodstock Road, OX2 6GG, Oxford, United Kingdom
  \email{alexander.hock@maths.ox.ac.uk}}
\emsaffil{3}{Mathematisches Institut der Westf\"alischen Wilhelms-Universit\"at,
Einsteinstr. 62, 48149 M\"unster, Germany \email{raimar@math.uni-muenster.de}}

%------
% Add MSC 2020 codes according to www.ams.org/msc/msc2020.html.
% Secondary codes (in square brackets) are optional.
%------
\classification{05A19, 05C30, 81R60}

%------
% Add a list of keywords.
%------
\keywords{Catalan numbers, chord diagrams, noncommutative QFT}

\begin{abstract}
  Correlation functions in a dynamic quartic matrix model are 
obtained from the two-point function through a recurrence relation. 
This paper gives the explicit solution of the recurrence by mapping it
bijectively to a two-fold nested combinatorial structure
each counted by Catalan numbers. These `nested Catalan tables'
have a description as diagrams of non-crossing chords and threads.
\end{abstract}

\maketitle

\section{Introduction}

The quartic matrix model is defined by the following measure 
on the space of self-adjoint 
$\mathcal{N}\times \mathcal{N}$-matrices:
\begin{align}
\label{measure}
\ud\mu(\Phi)= \frac{1}{\mathcal{Z}} 
\exp\Big(-\mathcal{N} \Tr\Big(E \Phi^2+\frac{\lambda}{4} \Phi^4
\Big)\Big)\;\ud\Phi\;,
\end{align}
where $E=\mathrm{diag}(E_0,...,E_{\mathcal{N}-1})$ is a positive
$\mathcal{N}\times \mathcal{N}$-matrix,
$\lambda$ a scalar and $\ud\Phi$ the standard Lebesgue measure.
The measure  (\ref{measure}) gives rise to
moments 
$\langle a_1b_1;...;a_Nb_N\rangle:= \int \ud\mu(\Phi)\; 
\Phi_{a_1b_1}
\Phi_{a_2b_2}\cdots 
\Phi_{a_Nb_N}$ which decompose as usual into cumulants
$\langle a_1b_1;...;a_Nb_N\rangle_c$.

This matrix model arises from a programme to understand Euclidean
quantum fields on noncommutative spaces \cite{SurveyNCG}.  The
large-$\mathcal{N}$ limit of properly rescaled cumulants
$\langle a_1b_1;...;a_Nb_N\rangle_c$, in a suitable topology, leads to
the same challenges as in familiar quantum field theories concerning
renormalisation and existence for $\lambda \neq 0$. It turned out that
for the matrix model the challenges are easier to master. Consider
cumulants with pairwise different $a_i$. Then
$\langle a_1b_1;...;a_Nb_N\rangle_c$ is only non-vanishing if $N$ is
even and $b_i=a_{\pi(i)}$ for some permutation
$\pi \in \mathcal{S}_N$. If $c(\pi)$ is the number of cycles in $\pi$,
we expand
\begin{align}
  \mathcal{N}^N \langle a_1b_{\pi(1)};...;a_Nb_{\pi(N)}\rangle_c
=:\sum_{g=0}^\infty \mathcal{N}^{2-2g-c(\pi)} G^{(g,\pi)}_{a_1....a_N}\;.
\end{align}
This paper is part of the programme to construct functions
$Z(\mathcal{N},\lambda)$, $\mu^2(\mathcal{N},\lambda)$ such that when starting
from (\ref{measure}) with
\[
E_k\mapsto Z(\mathcal{N},\lambda) \Big(E_k
+\frac{1}{2} \mu^2(\mathcal{N},\lambda)\Big)\;,\qquad \lambda\mapsto
\big(Z(\mathcal{N},\lambda)\big)^2\lambda\;,
\]
every limit $ \lim_{\mathcal{N}\to\infty} G^{(g,\pi)}_{a_1....a_N}$
exists in a neighbourhood of $\lambda=0$. 

The first step consists in understanding the case where $\pi$ has a
single cycle $c(\pi)=1$ and in leading order $g=0$ of the
$1/\mathcal{N}$-expansion. We relabel the indices to achieve
$b_{i}=a_{i+1}$ (with $b_0\equiv b_N$) and write
$G^{(0,\pi(i)=i+1)}_{a_1...a_N}=G^{(0)}_{b_0...b_{N-1}}$. For these
functions the following recurrence relation was proved in
\cite{Grosse:2012uv}:
\begin{equation}
G^{(0)}_{b_{0}\ldots b_{N-1}}=-\lambda\sum_{l=1}^{\frac{N-2}{2}}
\frac{G^{(0)}_{b_{0}\ldots b_{2l-1}}\cdot G^{(0)}_{b_{2l}\ldots b_{N-1}} 
- G^{(0)}_{b_{1}\ldots b_{2l}}\cdot G^{(0)}_{b_{0}b_{2l+1}\ldots b_{N-1}}}{
(E_{b_{0}}-E_{b_{2l}})(E_{b_{1}}-E_{b_{N-1}})}\;.\label{e:rr}
\end{equation}
Equation (\ref{e:rr}) is the counterpart of \emph{Tutte
  equations} arising in the enumeration of maps on surfaces
\cite{Tutte} or of \emph{loop equations} in matrix models
\cite{Eynard:2016yaa}.
The recurrence relation \eqref{e:rr} is specific to the measure
(\ref{measure}); Dyson-Schwinger techniques and $U(\mathcal{N})$
invariance of the partition function are used to prove it.
The planar 2-point function $G^{(0)}_{b_0b_1}$ satisfies 
a closed non-linear equation \cite{Grosse:2009pa} which was 
solved in \cite{Panzer:2018tvy} for a limiting case of
linearly spaced $E_k=(c_0+kc_1)$.

In this paper we establish a bijection between the solution of
(\ref{e:rr}) and a combinatorial problem for two nested structures
each counted by Catalan numbers. We thus propose to name them `nested
Catalan tables'.  As by-product we observed that the same relation
\eqref{e:rr} appears in the planar sector of the 2-matrix model for
mixed correlation functions \cite{Eynard:2005iq}.  The distinction
between even $b_{2i}$ and odd $b_{2i+1}$ matrix indices in
\eqref{e:rr} corresponds to the different matrices of the 2-matrix
model. This observation together with a striking r\^{o}le of an
involution in \cite{Panzer:2018tvy} supported the conjecture that also
the quartic matrix model (\ref{measure}) relates to topological
recursion \cite{Eynard:2007kz, Eynard:2016yaa}. This vision led two of
us (AH+RW) together with H.~Grosse in \cite{Grosse:2019jnv} to an
exact solution $G^{(0)}_{b_0b_1}$ of the non-linear equation
\cite{Grosse:2009pa} for arbitrary $E_k$ and $\lambda$ in or near
$\mathbb{R}_+$. Together with results of this paper we thus have a
complete understanding of the cumulants
$G^{(0,\pi(i)=i+1)}_{a_1...a_N}=G^{(0)}_{b_0...b_{N-1}}$ in leading
$\frac{1}{\mathcal{N}}$-order. In the meantime a precise relation
between (\ref{measure}) and blobbed topological recursion
\cite{Borot:2015hna} was established in \cite{Branahl:2020yru,
  Branahl:2020uxs, Hock:2021tbl}. This means that the quartic matrix
model generates the combinatorics of a family of intersection numbers
of characteristic classes on the moduli space
$\overline{\mathcal{M}}_{g,n}$ of stable complex curves.

\medskip

Let us return to the recurrence relation \eqref{e:rr} and explain
the combinatorial problem. For this purpose it is
safe to consider the $\{E_{b_{j}}\}$ as pairwise different formal
variables and to set $\lambda=-1$.
The complete expression for the ($N=2k+2$)-point function 
$G^{(0)}_{b_{0}b_{1}\ldots b_{2k+1}}$ 
according to (\ref{e:rr}) yields $2^kc_k$ terms of the form
\begin{equation}
\frac{\pm G^{(0)}_{b_{p}b_{q}}\cdots G^{(0)}_{b_{r}b_{s}}}{
(E_{b_{t}}-E_{b_{u}})\cdots(E_{b_{v}}-E_{b_{w}})}
\label{expansion}
\end{equation}
with $p<q$, $r<s$, $t<u$ and $v<w$, where
$c_k=\frac{1}{k+1}\binom{2k}{k}$ is the $k$th Catalan number.
However, some of the terms cancel. 
In this paper we answer the so far open questions: \emph{Which terms 
survive the cancellations? Can they be explicitly characterised, without 
going into the recursion?} The answer will be encoded in \emph{nested Catalan
tables}.

The paper is organised as follows.  In Sec.~\ref{sec:sym} the 
symmetries of $G^{(0)}_{b_0...b_{N-1}}$ are discussed. Afterwards, 
in Secs.~\ref{sec:CT} and \ref{sec:Ctab} we introduce 
Catalan tuples, nested Catalan tables, certain trees and operations on them.
The Catalan numbers
$c_{k}=\frac{1}{k+1}\binom{2k}{k}$ will count various parts of our
results and will be related to the number
$d_{k}=\frac{1}{k+1}\binom{3k+1}{k}$ of 
nested Catalan tables of length $k+1$, see Proposition \ref{thrm:NrT}.
Sec.~\ref{sec:bijection} is the main part of this paper. We prove in 
Theorem \ref{thrm:RC} that nested Catalan tables precisely encode the 
surviving terms in the expansion of $G^{(0)}_{b_0...b_{N-1}}$ with 
specified designated node. 

Both the nested Catalan tables and the $G^{(0)}_{b_0...b_{N-1}}$ can be depicted conveniently 
as chord diagrams with threads, which will be
introduced in Appendix~\ref{sec:CD}. 
Through these diagrams it will become clear that the recursion 
relation (\ref{e:rr}) is related to well-known combinatorial
problems~\cite{noy1,noy2}.

\section{Symmetries\label{sec:sym}}

The two-point function is symmetric, $G^{(0)}_{b_{p}b_{q}}=G^{(0)}_{b_{q}b_{p}}$. 
Because there is an even number of antisymmetric factors in 
the denominator of each term, it follows immediately that
\begin{equation}
G^{(0)}_{b_{0}b_{1}\ldots b_{N-1}}=G^{(0)}_{b_{N-1}\ldots b_{1}b_{0}}\;.\label{e:rr_s}
\end{equation}
Our aim is to prove cyclic invariance $G^{(0)}_{b_{0}b_{1}\ldots
  b_{N-1}}=G^{(0)}_{b_{1}\ldots b_{N-1}b_0}$.
We proceed by induction. Assuming that all $n$-point functions 
with $n\leq N-2$ are cyclically invariant, it is not difficult to check that
\begin{align}
&\hspace{-8mm}G^{(0)}_{b_{0}b_{1}\ldots b_{N-1}}
=\sum_{l=1}^{\frac{N-2}{2}}\frac{G^{(0)}_{b_{0}\ldots b_{2l-1}}\cdot 
G^{(0)}_{b_{2l}\ldots b_{N-1}} 
- G^{(0)}_{b_{1}\ldots b_{2l}}\cdot G^{(0)}_{b_{0}b_{2l+1}\ldots  b_{N-1}}}{
(E_{b_{0}}-E_{b_{2l}})(E_{b_{1}}-E_{b_{N-1}})}\nonumber
\\
&=-\sum_{l=1}^{\frac{N-2}{2}}\frac{G^{(0)}_{b_{0}b_{N-1}\ldots
    b_{2l+1}}\cdot 
G^{(0)}_{b_{2l}\ldots b_{1}}-G^{(0)}_{b_{N-1}\ldots b_{2l}}
\cdot G^{(0)}_{b_{2l-1}\ldots b_{1}b_{0}}}{
(E_{b_{0}}-E_{b_{2l}})(E_{b_{1}}-E_{b_{N-1}})}\nonumber
\\
&=\sum_{k=1}^{\frac{N-2}{2}}\frac{G^{(0)}_{b_{0}b_{N-1}\ldots
    b_{N-2k+1}}\cdot G^{(0)}_{b_{N-2k}\ldots b_{1}} 
- G^{(0)}_{b_{N-1}\ldots b_{N-2k}}\cdot 
G^{(0)}_{b_0 b_{N-2k-1}\ldots b_{1}}}{
(E_{b_{0}}-E_{b_{N-2k}})(E_{b_{N-1}}-E_{b_{1}})}\nonumber
\\
&=G^{(0)}_{b_{0}b_{N-1}\ldots b_{1}}
=G^{(0)}_{b_{1}\ldots b_{N-1}b_0}\;.\label{e:rr_rot}
\end{align}
The transformation $2l=N-2k$ and the symmetry (\ref{e:rr_s}) are 
applied here to rewrite the sum. This shows cyclic invariance.

Although the $N$-point functions are invariant under a cyclic
permutation of its indices, the preferred expansion into surving terms 
(\ref{expansion}) will depend on the choice of a designated node
$b_{0}$, the root. Our preferred expansion will have a clear combinatorial
significance, but it cannot be unique because of 
\begin{align}
\hspace*{-1em}
\frac{1}{E_{b_p}{-}E_{b_q}}\cdot\frac{1}{E_{b_q}{-}E_{b_r}}
+\frac{1}{E_{b_r}{-}E_{b_p}}\cdot\frac{1}{E_{b_p}{-}E_{b_q}}
+\frac{1}{E_{b_q}{-}E_{b_r}}\cdot\frac{1}{E_{b_r}{-}E_{b_p}}=0\;.
\label{rfp}
\end{align}
These identities must be employed several times to establish cyclic 
invariance of our preferred expansion.

\section{Catalan tuples\label{sec:CT}}

\begin{dfnt}[Catalan tuple]\label{dfnt:cattup}
  A Catalan tuple $\tilde{e}=(e_{0},\ldots,e_{k})$ of length
  $k\in\mathbb{N}_{0}$ is a tuple of integers $e_{j}\geq0$ for
  $j=0,\ldots,k$, such that
\begin{equation}
\sum_{j=0}^{k}e_{j}=k\qquad \text{ and }\qquad 
\sum_{j=0}^{l}e_{j}>l\quad\text{for }l=0,\ldots,k-1\;.
\label{ineq-Catalan}
\end{equation}
The set of Catalan tuples of length $|\tilde{e}|:=k$ is denoted by
$\mathcal{C}_{k}$. 
\end{dfnt}
\noindent
For $\tilde{e}=(e_{0},\ldots,e_{k})$
it follows immediately that, for all $k\geq 0$, $e_{k}=0$ and that,
for all $k>0$, $e_{0}>0$.

\begin{exm}
We have $\mathcal{C}_{0}=\{(0)\}$, $\mathcal{C}_{1}=\{(1,0)\}$ and 
$\mathcal{C}_{2}=\{(2,0,0),(1,1,0)\}$. 
  All Catalan tuples of length $3$ are given in the first column of
  Table~\ref{f:CTk3}. 
\end{exm}

\begin{rmk}
  Catalan tuples can be used to establish bijections with
  several structures counted
  by Catalan numbers. In Definitions~\ref{dfnt:rpt} and \ref{dfnt:ot}
  we provide two bijections to
  planted plane trees. Here we give the bijection to Dyck paths
  on a $k\times k$ lattice which do not go below the diagonal.
  Given a Catalan tuple $\tilde{e}=(e_0,\ldots,e_k)$ with $k\geq 1$. Start
  at the bottom left corner, 
go $e_0$ steps north followed by one step east, then 
go $e_1$ steps north followed by one step east, \dots,
finally go $e_{k-1}$ steps north followed by one step east, and stop
at the top right corner. The first condition in (\ref{ineq-Catalan})
prevents the path from going below the diagonal, the last condition
guarantees that the path ends at the top right corner.
The last row of Table~\ref{f:CTk3} gives
the Dyck paths for the Catalan tuples of length $3$.
\end{rmk}

We now define two particular compositions of Catalan tuples.
Appendix~\ref{app:ex} provides a few examples.
\begin{dfnt}[$\circ$-composition]\label{dfnt:circ}
The composition $\circ:\mathcal{C}_{k}
\times\mathcal{C}_{l}\rightarrow \mathcal{C}_{k+l+1}$ is given by
\begin{align*}
(e_{0},\ldots,e_{k})\circ(f_{0},\ldots,f_{l})
&:=(e_{0}+1,e_{1},\ldots,e_{k-1},e_{k},f_{0},f_{1},\ldots,f_{l})\;.
\end{align*}
\end{dfnt}
\noindent
No information is lost in this composition, i.e.\ it is possible to 
uniquely retrieve both terms. In particular, $\circ$ cannot be 
associative or commutative. Consider for a Catalan tuple 
$\tilde{e}=(e_{0},\ldots,e_{k})$ partial sums 
$p_{l}:\mathcal{C}_{k}\rightarrow \{0,\ldots,k\}$ and maps 
$\sigma_{a}:\mathcal{C}_{k}\rightarrow \{0,\ldots,k\}$ defined by
\begin{align}
p_{l}(\tilde{e}) &:=-l+\sum_{j=0}^{l}e_{j}\;,\qquad\text{for } 
l=0,\ldots,k-1\;,
\\
\sigma_{a}(\tilde{e})&:=\min\{l\,|\,p_{l}(\tilde{e})=a\}\;.
\nonumber
\end{align}
Then 
\begin{equation}
\tilde{e}=(e_{0},\ldots,e_{k})
= (e_{0}-1,e_{1},\ldots,e_{\sigma_{1}(\tilde{e})})
\circ(e_{\sigma_{1}(\tilde{e})+1},\ldots,e_{k})\;.
\label{circ-factor}
\end{equation}
Because $\sigma_{1}(\tilde{e})$ exists for any
$\tilde{e}\in\mathcal{C}_{k}$ with $k\geq1$, every Catalan tuple has
unique $\circ$-factors. Only these two Catalan tuples, composed by
$\circ$, yield $(e_{0},\ldots,e_{k})$. This implies that the number
$c_k$ of Catalan tuples in $\mathcal{C}_{k}$ satisfies Segner's recurrence
relation
\begin{equation*}
c_{k}=\sum_{m=0}^{k-1}c_{m}c_{k-1-m}
\end{equation*}
together with $c_0=1$, which is solved by the Catalan numbers
$c_k=\frac{1}{k+1}\binom{2k}{k}$.

In Remark \ref{Dyck-circ} we formulate the $\circ$-decomposition
in terms of Dyck paths.

The other composition of Catalan tuples is a variant of the $\circ$-product. 
\begin{dfnt}[$\bullet$-composition]\label{dfnt:bullet}
The composition $\bullet:\mathcal{C}_{k}\times\mathcal{C}_{l}
\rightarrow \mathcal{C}_{k+l+1}$ is given by
\begin{equation*}
(e_{0},\ldots,e_{k})\bullet(f_{0},\ldots,f_{l})=(e_{0}+1,f_{0},\ldots,f_{l},e_{1},\ldots,e_{k})\quad.
\end{equation*}
\end{dfnt}
As in the case of the composition $\circ$, Definition \ref{dfnt:circ},
no information is lost in the product $\bullet$. It is reverted by
\begin{equation}
\tilde{e}=(e_0,\ldots,e_k)
=(e_0-1,e_{1+\sigma_{e_{0}-1}(\tilde{e})},\ldots,e_k) 
\bullet (e_1,\ldots ,e_{\sigma_{e_{0}-1}(\tilde{e})})\;.
\label{bullet-factor}
\end{equation}
Because $\sigma_{e_{0}-1}(\tilde{e})$ exists for any 
$\tilde{e}\in\mathcal{C}_{k}$ with $k\geq 1$
(also for $e_0=1$ where $\sigma_{e_0-1}(\tilde{e}) = k$),
every Catalan tuple has a unique pair of $\bullet$-factors.

In Remark \ref{Dyck-bullet} we formulate the $\bullet$-decomposition
in terms of Dyck paths.

Out of these Catalan tuples we will construct three sorts of trees: 
\emph{pocket tree}, \emph{direct tree}, \emph{opposite tree}. They are
all planted plane trees, which means they are embedded into the plane and 
planted into a monovalent 
phantom root which connects to a unique vertex that we consider as the 
(real) root. 
We adopt the convention that the phantom root 
is not shown; its implicit presence manifests in a different 
counting of the valencies of the real root.
Pocket tree and direct tree are the same, but
their r\^ole will be different. 
Their drawing algorithms are given by the next definitions.
\begin{dfnt}[direct tree, pocket tree] \label{dfnt:rpt}
For a Catalan tuple $(e_{0},\ldots,e_{k})\in\mathcal{C}_{k}$, 
draw $k+1$ vertices on a line. Starting at the root $l=0$:
\begin{itemize}
\item unless $l=0$, connect this vertex to the last vertex ($m<l$) 
with an open half-edge;
\item if $e_{l}>0$: $e_{l}$ half-edges must be attached to vertex $l$;
\item move to the next vertex.
\end{itemize}
For direct trees, vertices will be called \emph{nodes} and edges will be 
called \emph{threads}; they are oriented from left to right.
For pocket trees, vertices are called \emph{pockets}.
\end{dfnt}

\begin{dfnt}[opposite tree]\label{dfnt:ot}
For a Catalan tuple $(e_{0},\ldots,e_{k})\in\mathcal{C}_{k}$, 
draw $k+1$ vertices on a line. Starting at the root $l=0$:
\begin{itemize}
\item if $e_{l}>0$: $e_{l}$ half-edges must be attached to vertex $l$;
\item if $e_{l}=0$:
\item[-] connect vertex $l$ to the last vertex ($m<l$) with an open half-edge;
\item[-] if vertex $l$ is now not connected to the 
last vertex ($n\leq m<l$) with an open half-edge, repeat this until it is;
\item move to the next vertex.
\end{itemize}
For opposite trees, vertices will be called \emph{nodes} and edges
will be called \emph{threads}; they are oriented from left to right.
\end{dfnt}

Examples of these trees can be seen in Figure \ref{f:PT} and
Table~\ref{f:CTk3}. It will be explained in Sec.~\ref{sec:bijection} how these
trees relate to the recurrence relation (\ref{e:rr}) and how to 
label the nodes. 
The pocket trees will often be represented
with a top-down orientation, instead of a left-right one.

\begin{figure}[!ht]
\setlength{\unitlength}{0.75mm}
\begin{picture}(100,42)
\put(-27,32){\mbox{DT:}}
\put(-27,5){\mbox{OT:}}
\multiput(-11,29)(10,0){15}{\textbullet}
\put(-5,30){\oval(10,5)[t]}
\put(0,30){\oval(20,10)[t]}
\put(5,30){\oval(30,15)[t]}
\put(25,30){\oval(10,5)[t]}
\put(35,30){\oval(10,5)[t]}
\put(40,30){\oval(20,10)[t]}
\put(45,30){\oval(30,15)[t]}
\put(30,30){\oval(80,20)[t]}
\put(75,30){\oval(10,5)[t]}
\put(85,30){\oval(10,5)[t]}
\put(90,30){\oval(20,10)[t]}
\put(90,30){\oval(40,15)[t]}
\put(55,30){\oval(130,25)[t]}
\put(60,30){\oval(140,30)[t]}

\multiput(-11,-1)(10,0){15}{\textbullet}
\put(-5,0){\oval(10,5)[t]}
\put(0,0){\oval(20,10)[t]}
\put(25,0){\oval(70,25)[t]}
\put(50,0){\oval(120,30)[t]}
\put(55,0){\oval(130,35)[t]}
\put(60,0){\oval(140,40)[t]}
\put(40,0){\oval(40,20)[t]}
\put(35,0){\oval(10,5)[t]}
\put(40,0){\oval(20,10)[t]}
\put(45,0){\oval(30,15)[t]}
\put(90,0){\oval(40,20)[t]}
\put(85,0){\oval(30,15)[t]}
\put(85,0){\oval(10,5)[t]}
\put(90,0){\oval(20,10)[t]}

\end{picture}
\caption{Direct tree (upper) and the opposite tree (lower) for the 
Catalan tuple $(6,0,0,1,3,0,0,0,2,2,0,0,0,0,0)
=(5,0,0,1,3,0,0,0,2,2,0,0,0,0)\circ(0)
=(5,0,1,3,0,0,0,2,2,0,0,0,0,0)\bullet (0)$.\label{f:PT}}
\end{figure}
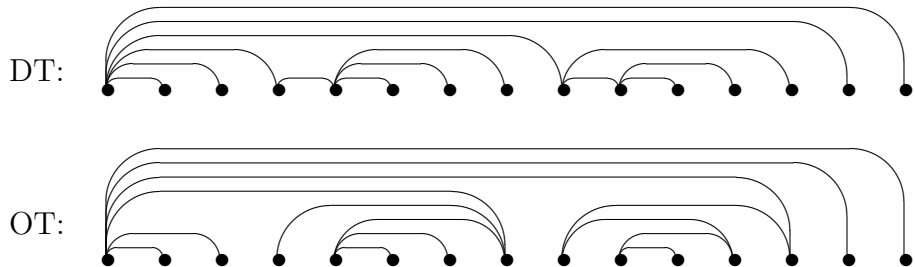

\begin{table}[!ht]
$\begin{array}{|c|c|c|c|c|}
\hline
{\text{Catalan tuple}} &
\text{pocket tree} &
\text{direct tree} &
\text{opposite tree} &
\text{Dyck path} 
   \\ \hline 
\raisebox{5mm}{(3,0,0,0)} 
&
\begin{picture}(20,13)
\put(10.4,7){\textbullet}
\put(5.4,2){\textbullet}
\put(10.4,2){\textbullet}
\put(15.4,2){\textbullet}
\put(11,8){\line(-1,-1){5}}
\put(11,8){\line(1,-1){5}}
\put(11,8){\line(0,-1){5}}
\end{picture}
&
\begin{picture}(27,14)
\put(5.4,2){\textbullet}
\put(10.4,2){\textbullet}
\put(15.4,2){\textbullet}
\put(20.4,2){\textbullet}
\put(8.5,3){\oval(5,5)[t]}
\put(11,3){\oval(10,10)[t]}
\put(13.5,3){\oval(15,15)[t]}
\end{picture}
&
\begin{picture}(27,14)
\put(5.4,2){\textbullet}
\put(10.4,2){\textbullet}
\put(15.4,2){\textbullet}
\put(20.4,2){\textbullet}
\put(8.5,3){\oval(5,5)[t]}
\put(11,3){\oval(10,10)[t]}
\put(13.5,3){\oval(15,15)[t]}
\end{picture}
&
\begin{picture}(15,17)                
\put(0,0){\line(1,0){15}}
\put(0,5){\line(1,0){15}}
\put(0,10){\line(1,0){15}}
\put(0,15){\line(1,0){15}}
\put(0,0){\line(0,1){15}}
\put(5,0){\line(0,1){15}}
\put(10,0){\line(0,1){15}}
\put(15,0){\line(0,1){15}}
\linethickness{0.7mm}
\put(0,0){\line(0,1){15}}
\put(0,15){\line(1,0){15}}
\end{picture}
\\ \hline 
\raisebox{6mm}{(2,1,0,0)} 
&
\begin{picture}(20,14)
\put(10.4,10){\textbullet}
\put(5.4,5){\textbullet}
\put(15.4,5){\textbullet}
\put(5.4,0){\textbullet}
\put(11,11){\line(-1,-1){5}}
\put(11,11){\line(1,-1){5}}
\put(6,6){\line(0,-1){5}}
\end{picture}
&
\begin{picture}(27,13)
\put(5.4,2){\textbullet}
\put(10.4,2){\textbullet}
\put(15.4,2){\textbullet}
\put(20.4,2){\textbullet}
\put(8.5,3){\oval(5,5)[t]}
\put(13.5,3){\oval(5,5)[t]}
\put(13.5,3){\oval(15,10)[t]}
\end{picture}
&
\begin{picture}(27,14)
\put(5.4,2){\textbullet}
\put(10.4,2){\textbullet}
\put(15.4,2){\textbullet}
\put(20.4,2){\textbullet}
\put(11,3){\oval(10,10)[t]}
\put(13.5,3){\oval(5,5)[t]}
\put(13.5,3){\oval(15,15)[t]}
\end{picture}
&
\begin{picture}(15,17)                
\put(0,0){\line(1,0){15}}
\put(0,5){\line(1,0){15}}
\put(0,10){\line(1,0){15}}
\put(0,15){\line(1,0){15}}
\put(0,0){\line(0,1){15}}
\put(5,0){\line(0,1){15}}
\put(10,0){\line(0,1){15}}
\put(15,0){\line(0,1){15}}
\linethickness{0.7 mm}
\put(0,0){\line(0,1){10}}
\put(0,10){\line(1,0){5}}
\put(5,10){\line(0,1){5}}
\put(5,15){\line(1,0){10}}
\end{picture}
 \\ \hline 
\raisebox{6mm}{(2,0,1,0)} 
&
\begin{picture}(20,14)
\put(10.4,10){\textbullet}
\put(5.4,5){\textbullet}
\put(15.4,5){\textbullet}
\put(15.4,0){\textbullet}
\put(11,11){\line(-1,-1){5}}
\put(11,11){\line(1,-1){5}}
\put(16,6){\line(0,-1){5}}
\end{picture}
&
\begin{picture}(27,13)
\put(5.4,2){\textbullet}
\put(10.4,2){\textbullet}
\put(15.4,2){\textbullet}
\put(20.4,2){\textbullet}
\put(8.5,3){\oval(5,5)[t]}
\put(18.5,3){\oval(5,5)[t]}
\put(11,3){\oval(10,10)[t]}
\end{picture}
&
\begin{picture}(27,13)
\put(5.4,2){\textbullet}
\put(10.4,2){\textbullet}
\put(15.4,2){\textbullet}
\put(20.4,2){\textbullet}
\put(8.5,3){\oval(5,5)[t]}
\put(18.5,3){\oval(5,5)[t]}
\put(13.5,3){\oval(15,10)[t]}
\end{picture}
&
\begin{picture}(15,17)                
\put(0,0){\line(1,0){15}}
\put(0,5){\line(1,0){15}}
\put(0,10){\line(1,0){15}}
\put(0,15){\line(1,0){15}}
\put(0,0){\line(0,1){15}}
\put(5,0){\line(0,1){15}}
\put(10,0){\line(0,1){15}}
\put(15,0){\line(0,1){15}}
\linethickness{0.7 mm}
\put(0,0){\line(0,1){10}}
\put(0,10){\line(1,0){10}}
\put(10,10){\line(0,1){5}}
\put(10,15){\line(1,0){5}}
\end{picture}
\\
\hline
\raisebox{6mm}{(1,2,0,0)} 
&
\begin{picture}(20,13)
\put(10.3,6){\textbullet}
\put(5.4,1){\textbullet}
\put(10.3,11){\textbullet}
\put(15.3,1){\textbullet}
\put(11,7){\line(-1,-1){5}}
\put(10.8,7){\line(1,-1){5}}
\put(10.8,12){\line(0,-1){5}}
\end{picture}
&
\begin{picture}(27,15)
\put(5.4,2){\textbullet}
\put(10.4,2){\textbullet}
\put(15.4,2){\textbullet}
\put(20.4,2){\textbullet}
\put(8.5,3){\oval(5,5)[t]}
\put(13.5,3){\oval(5,5)[t]}
\put(16,3){\oval(10,10)[t]}
\end{picture}
&
\begin{picture}(27,13)
\put(5.4,2){\textbullet}
\put(10.4,2){\textbullet}
\put(15.4,2){\textbullet}
\put(20.4,2){\textbullet}
\put(16,3){\oval(10,10)[t]}
\put(13.5,3){\oval(5,5)[t]}
\put(13.5,3){\oval(15,15)[t]}
\end{picture}
&
\begin{picture}(15,17)                
\put(0,0){\line(1,0){15}}
\put(0,5){\line(1,0){15}}
\put(0,10){\line(1,0){15}}
\put(0,15){\line(1,0){15}}
\put(0,0){\line(0,1){15}}
\put(5,0){\line(0,1){15}}
\put(10,0){\line(0,1){15}}
\put(15,0){\line(0,1){15}}
\linethickness{0.7 mm}
\put(0,0){\line(0,1){5}}
\put(0,5){\line(1,0){5}}
\put(5,5){\line(0,1){10}}
\put(5,15){\line(1,0){10}}
\end{picture}
\\ \hline 
\raisebox{8mm}{(1,1,1,0)} 
&
\begin{picture}(20,18)
\put(10.4,15){\textbullet}
\put(10.4,10){\textbullet}
\put(10.4,5){\textbullet}
\put(10.4,0){\textbullet}
\put(11,15.5){\line(0,-1){15}}
\end{picture}
&
\begin{picture}(27,18)
\put(5.4,5){\textbullet}
\put(10.4,5){\textbullet}
\put(15.4,5){\textbullet}
\put(20.4,5){\textbullet}
\put(8.5,6){\oval(5,5)[t]}
\put(13.5,6){\oval(5,5)[t]}
\put(18.5,6){\oval(5,5)[t]}
\end{picture}
&
\begin{picture}(27,18)
\put(5.4,5){\textbullet}
\put(10.4,5){\textbullet}
\put(15.4,5){\textbullet}
\put(20.4,5){\textbullet}
\put(13.5,6){\oval(15,15)[t]}
\put(16,6){\oval(10,10)[t]}
\put(18.5,6){\oval(5,5)[t]}
\end{picture}
&
\begin{picture}(15,17)                
\put(0,0){\line(1,0){15}}
\put(0,5){\line(1,0){15}}
\put(0,10){\line(1,0){15}}
\put(0,15){\line(1,0){15}}
\put(0,0){\line(0,1){15}}
\put(5,0){\line(0,1){15}}
\put(10,0){\line(0,1){15}}
\put(15,0){\line(0,1){15}}
\linethickness{0.7 mm}
\put(0,0){\line(0,1){5}}
\put(0,5){\line(1,0){5}}
\put(5,5){\line(0,1){5}}
\put(5,10){\line(1,0){5}}
\put(10,10){\line(0,1){5}}
\put(10,15){\line(1,0){5}}
\end{picture}
 \\ \hline 
\end{array}
$
\caption{Catalan tuples, their corresponding planted plane trees and
Dyck paths for $k=3$. The phantom roots of the planted plane trees 
are not shown. The  real root is on top for the pocket tree 
and on the left for direct and opposite trees. 
\label{f:CTk3}}
\end{table}

\section{Nested Catalan tables\label{sec:Ctab}}

A nested Catalan table is a `Catalan tuple of Catalan tuples':
\begin{dfnt}[nested Catalan table]\label{dfnt:cattab}
A \emph{nested Catalan table of length $k$} is a tuple 
$T_k=\langle\tilde{e}^{(0)},\tilde{e}^{(1)},\ldots,
\tilde{e}^{(k)}\rangle$ of Catalan tuples $\tilde{e}^{(j)}$, such 
that 
$(1+|\tilde{e}^{(0)}|,|\tilde{e}^{(1)}|,\ldots,|\tilde{e}^{(k)}|)$,
the \emph{length} tuple of $T_k$, is 
itself a Catalan tuple of length $k$. We let $\mathcal{T}_{k}$
be the set of all nested Catalan tables of length $k$. 
The constituent $\tilde{e}^{(j)}$ in a nested Catalan table is called the 
$j$-th pocket.
\end{dfnt}
\noindent 
We will show in Sec.~\ref{sec:bijection} that a nested Catalan table
contains all information about individual terms in the expansion
(\ref{expansion}) of the $N$-point function $G^{(0)}_{b_0...b_{N-1}}$.
Nested Catalan tables have a graphical presentation as diagrams of
non-crossing chords with threads which we introduce in
Appendix~\ref{sec:CD}.

Recall the composition $\circ$ from Definition \ref{dfnt:circ} and the fact
that any Catalan tuple of length $\geq 1$ has a unique pair of 
$\circ$-factors. We extend $\circ$ as follows to nested Catalan tables:
\begin{dfnt}[$\smalllozenge$-operation]\label{dfnt:triangle}
The operation $\smalllozenge: \mathcal{T}_{k}\times
\mathcal{T}_{l}\rightarrow\mathcal{T}_{k+l}$ is given by
\begin{equation*}
\langle \tilde{e}^{(0)},\ldots,\tilde{e}^{(k)}\rangle 
\smalllozenge\langle \tilde{f}^{(0)},\ldots,\tilde{f}^{(l)}\rangle 
:= \langle \tilde{e}^{(0)} \circ \tilde{f}^{(0)},\tilde{e}^{(1)},
\ldots,\tilde{e}^{(k)}, \tilde{f}^{(1)},\ldots,\tilde{f}^{(l)}\rangle\;.
\end{equation*}
\end{dfnt}
\noindent
Now suppose the nested Catalan table on the right-hand side is given. If the
$0^{\mathrm{th}}$ pocket has length $\geq 1$, then it uniquely factors
into $\tilde{e}^{(0)} \circ \tilde{f}^{(0)}$. Consider 
\begin{equation}
\hat{k}= \sigma_{1+|\tilde{f}^{(0)}|}
\big((1+|\tilde{e}^{(0)}\circ
\tilde{f}^{(0)}|,|\tilde{e}^{(1)}|,\ldots,
|\tilde{e}^{(k)}|, |\tilde{f}^{(1)}|,\ldots,|\tilde{f}^{(l)}|)\big)\;.
\label{k-lozenge}
\end{equation}
By construction, $\hat{k}=k$ so that $\smalllozenge$ can 
be uniquely reverted. Note also that nested Catalan tables 
$\langle (0),\tilde{e}_1,\dots,\tilde{e}_k\rangle$ do not have a 
$\smalllozenge$-decomposition.

The composition $\bullet$ of Catalan tuples is extended as follows to 
nested Catalan tables:
\begin{dfnt}[$\smallblacklozenge$-operation]\label{dfnt:box}
The operation $\smallblacklozenge: 
\mathcal{T}_{k}\times\mathcal{T}_{l}\rightarrow\mathcal{T}_{k+l}$ is given by
\begin{equation*}
\langle \tilde{e}^{(0)},\ldots,\tilde{e}^{(k)}\rangle 
\smallblacklozenge\langle
\tilde{f}^{(0)},\ldots,\tilde{f}^{(l)}\rangle 
:= \langle \tilde{e}^{(0)} ,\tilde{e}^{(1)}\bullet\tilde{f}^{(0)}, 
\tilde{f}^{(1)},\ldots,\tilde{f}^{(l)}, \tilde{e}^{(2)},\ldots,
\tilde{e}^{(k)}\rangle\;.
\end{equation*}
\end{dfnt}
\noindent
If the $1^{\text{st} }$ pocket has length $\geq 1$, it uniquely
factors as $\tilde{e}^{(1)}\bullet\tilde{f}^{(0)}$, and we extract% 
\begin{equation}
\hat{l}:=\sigma_{|\tilde{e}^{(0)}|+|\tilde{e}^{(1)}|+1}
\big((1+|\tilde{e}^{(0)}| ,|\tilde{e}^{(1)}\bullet\tilde{f}^{(0)}|, 
|\tilde{f}^{(1)}|,\ldots,|\tilde{f}^{(l)}|, |\tilde{e}^{(2)}|,\ldots,
|\tilde{e}^{(k)}|)\big)\;.
\label{l-blacklozenge}
\end{equation}
By construction $\hat{l}=l$, and $\smallblacklozenge$ is uniquely
reverted. 

We let $\mathcal{S}_k=\{
\langle \tilde{e}_0 , (0),\tilde{e}_2,\dots,\tilde{e}_k\rangle
\in \mathcal{T}_k\}$ be the subset of length-$k$ nested Catalan tables having 
$(0)$ as their 1$^{\text{st}}$ pocket. The nested Catalan tables $S\in 
\mathcal{S}_k$ are precisely those which do not have a 
$\smallblacklozenge$-decomposition. The distinction 
between $\mathcal{S}_l$ and its complement in $\mathcal{T}_l$ is 
the key to determine the number of nested Catalan tables:
\begin{prps}\label{thrm:NrT}
  The set $\mathcal{T}_{k+1}$ of nested Catalan tables and its subset
  $\mathcal{S}_{k+1}$ with 1$^{\text{st}}$ pocket $(0)$ have cardinalities
 \begin{equation}
 d_{k}:=|\mathcal{T}_{k+1}|=\frac{1}{k+1}\binom{3k+1}{k}
\quad\text{and}\quad 
 h_{k}:=|\mathcal{S}_{k+1}|=\frac{1}{2k+1}\binom{3k}{k}\;.\label{e:res}
\end{equation}
\end{prps}
\begin{proof}
Let 
\[
\mathcal{D}(x):=\sum_{k=1}^\infty x^k \sum_{T\in \mathcal{T}_k} T
\qquad\text{and}\qquad
\mathcal{H}(x):=\sum_{k=1}^\infty x^k \sum_{S\in \mathcal{S}_k} S
\]
be the generating function of the set of all nested Catalan tables and of 
those having $(0)$ as their 1$^{\text{st}}$ pocket, respectively. Then
\begin{align}
\mathcal{D}(x)=\mathcal{D}(x) \smallblacklozenge \mathcal{D}(x)
+\mathcal{H}(x)
\label{cG}
\end{align}
because precisely the complements 
$\mathcal{T}_k\setminus \mathcal{S}_k$ have 
a unique $\smallblacklozenge$-decomposition. With the exception of 
$\langle(0),(0)\rangle \in \mathcal{S}_1=\mathcal{T}_1$, all 
$S=\langle \tilde{e}^0,(0),
\tilde{e}^2,\dots,\tilde{e}^k\rangle
\in \mathcal{S}_k$ with $k\geq 2$ have $|\tilde{e}^0|\geq 1$. 
Therefore, they have a unique 
$\smalllozenge$-decomposition, where the left factor necessarily belongs to 
$\mathcal{S}_l$  for some $l$:
\begin{align}
\mathcal{H}(x)=\mathcal{H}(x) \smalllozenge \mathcal{D}(x)
+ x \langle(0),(0)\rangle\;.
\label{cH}
\end{align}
Introducing the generating functions 
$D(x)=\sum_{k=0}^\infty x^{k+1} d_k$ and 
$H(x)=\sum_{k=0}^\infty x^{k+1} h_k$ 
of the cardinalities
$d_{k}=|\mathcal{T}_{k+1}|$ and 
$h_{k}=|\mathcal{S}_{k+1}|$, eqs.\ (\ref{cG}) and (\ref{cH}) 
project to quadratic relations
\begin{equation}
D(x)=D(x)\cdot D(x)+H(x)
\qquad\text{and}\qquad 
H(x)=H(x)\cdot D(x)+x\;.
\label{e:quadraticequation}
\end{equation}
Multiplying the first equation by $H(x)$ and
the second one by $D(x)$ gives $x\cdot D(x)=H^{2}(x)$, which
separates (\ref{e:quadraticequation}) into cubic relations
\begin{equation}
D(x)(1-D(x))^2=x\qquad\text{and} \qquad
\frac{H(x)}{\sqrt{x}} \Big(1-\Big(\frac{H(x)}{\sqrt{x}}\Big)^2\Big)
=\sqrt{x}\;.
\label{e:cubicequation}
\end{equation}
The assertion (\ref{e:res})
follows from the Lagrange inversion 
formula\footnote{\emph{Lagrange inversion formula.} 
Let $f,g\in x\mathbb{C}[[x]]$ be formal power series inverse to each other,
$g(f(x))=x$. Then their coefficients are related by 
$[x^n]g(x)=\frac{1}{n} [x^{-1}] \frac{1}{(f(x))^n}$. In particular, for
$f(x)=\frac{x}{\phi(x)}$ one has 
$[x^n]g(x)=\frac{1}{n} [x^{n-1}] (\phi(x))^n$.}.
To obtain the first equation (\ref{e:res}) 
one sets $f(x)=x(1-x)^2$ and 
$\phi(x)=\frac{1}{(1-x)^2}$ to get 
$d_k=\frac{1}{k+1}[x^k] \frac{1}{(1-x)^{2k+2}}$.
To obtain the second equation (\ref{e:res}) 
one sets $\sqrt{x}=y$, $f(y)=y(1-y^2)$ and 
$\phi(y)=\frac{1}{1-y^2}$ to get 
$h_k=\frac{1}{2k+1}[y^{2k}] \frac{1}{(1-y^2)^{2k+1}}
=\frac{1}{2k+1}[x^{k}] \frac{1}{(1-x)^{2k+1}}$.
\end{proof}
\begin{rmk}
More information about the integer sequences $d_{k}$ (A006013) 
and $h_{k}$ (A001764) can be found via~\cite{OEIS:d} 
and~\cite{OEIS:f}, respectively.
Equations (\ref{e:cubicequation}) are 
higher-order variants of the equation $C(x)(1-C(x))=x$ for the 
generating function $C(x)=\sum_{n=0}^\infty c_nx^{n+1}$ of Catalan numbers.
\end{rmk}
\begin{corl}
The number $d_k$ of nested Catalan tables satisfies
\begin{equation}
d_{k}=\sum_{(e_{0},\ldots,e_{k+1})\in\mathcal{C}_{k+1}} 
c_{e_{0}-1}c_{e_{1}}\cdots c_{e_{k}}c_{e_{k+1}}\;. \label{dsumc}
\end{equation}
\end{corl}
\begin{proof}
There are $c_{|\tilde{e}_{0}|}\cdots c_{|\tilde{e}_{k+1}|}$ 
nested Catalan tables $\langle \tilde{e}_0,\dots,\tilde{e}_{k+1}\rangle$ 
of the same length tuple 
$(|\tilde{e}_{0}|+1,|\tilde{e}_{1}|,
\ldots,|\tilde{e}_{k+1}|)\in\mathcal{C}_{k+1}$. Set 
$e_0=|\tilde{e}_{0}|+1$ and 
$e_j=|\tilde{e}_{j}|$ for $j=1,\dots,k+1$.
\end{proof}

\section{The bijection between nested
  Catalan tables and contributions
  to $G^{(0)}_{b_0...b_{N-1}}$}
\label{sec:bijection}

This section is the main part of this paper. We will omit in the sequel the
superscript $G^{(0)}_{b_1b_2}=G_{b_1b_2}$. We remark that
the graphical presentation given in Appendix~\ref{sec:CD} was very helpful
to identify this bijection.
\begin{dfnt}\label{dfnt:CTG}
  To a nested Catalan table
  $T_{k+1}=\langle\tilde{e}^{(0)},\tilde{e}^{(1)},\ldots,
  \tilde{e}^{(k+1)}\rangle\in\mathcal{T}_{k+1}$ with $N/2=k+1$ we
  associate a monomial $[T]_{b_{0},\ldots,b_{N-1}}$ in
  $G_{b_{l}b_{m}}$ and $\frac{1}{E_{b_{l'}}-E_{b_{m'}}}$ as follows:
\begin{enumerate}
\item Build the pocket tree for the length tuple $(
  1+|\tilde{e}^{(0)}|,|\tilde{e}^{(1)}|,\ldots,
  |\tilde{e}^{(k+1)}|) \in \mathcal{C}_{k+1}$. It has $k+1$ edges
  and every edge has two sides. Starting from the root and turning
  counterclockwise, label the edge sides in consecutive
  order\footnote{This is the same order as in
    \cite[Fig.~5.14]{Stanley:1999??}.\label{fn-order}}  from
  $b_{0}$ to $b_{N-1}$. An edge labelled $b_{l}b_{m}$ encodes a factor
  $G_{b_{l}b_{m}}$ in $G^{(0)}_{b_0...b_{N-1}}$.

\item Label the $k+2$ vertices of the pocket tree by
  $P_{0},\ldots,P_{k+1}$ in consecutive order$^{\ref{fn-order}}$ when turning
  counterclockwise around the tree. Let $v(P_{m})$ be the valency of
  vertex $P_{m}$ (number of edges attached to $P_{m}$) and $L_{m}$ be the
  distance between $P_{m}$ and the root $P_{0}$ (number of edges in
  shortest path between $P_{m}$ and $P_{0}$).

\item For every vertex $P_{m}$ that is not a leaf, read off the
  $2v(P_{m})$ side labels of edges connected to $P_{m}$. Draw two
  rows of $v(P_{m})$ nodes each. Label the nodes of the
  first row by the even edge side labels in natural order,
  i.e.\ starting at the edge closest to the root and proceed in the
  counterclockwise direction. Label the nodes of the other row by
  the odd edge side labels using the same edge order. Take the $m$-th
  Catalan tuple $\tilde{e}^{(m)}$ of the nested Catalan table. 
If $L_m$ is even, draw the
direct (resp.~opposite)
tree encoded by $\tilde{e}^{(m)}$ between the row of 
even (resp.~odd) nodes.
If $L_m$ is odd, draw the
opposite (resp.~direct)
tree encoded by $\tilde{e}^{(m)}$ between the row of 
even (resp.~odd) nodes.
Encode a thread from $b_{l}$ to $b_{m}$ in the
 direct or opposite tree by a factor $\frac{1}{E_{b_{l}}-E_{b_{m}}}$.
\end{enumerate}
\end{dfnt}

\begin{rmk}\label{rmk:sidelabels}
In proofs below we sometimes have to insist that one side label of a
pocket edge is a particular $b_k$, whereas the label of the other side
does not matter. Is such a situation we will label the other side by 
$b_{\overline{k}}$. Note that if $b_k$ is 
even (resp.~odd), then  $b_{\overline{k}}$ is 
odd (resp.~even).
\end{rmk}

\begin{rmk}
  For the purpose of this article it is sufficient to mention that an
  explicit construction for the level function
  $L_{m}:\mathcal{C}_{k+1}\rightarrow\{0,\ldots, k\}$ exists.
\end{rmk}

\begin{exm}\label{ex:G12a}
Let 
$T=\langle(2,0,0),(1,1,0),(0),(0),(0),(1,0),(0)\rangle\in \mathcal{T}_6$.
Its length tuple is $(3,2,0,0,0,1,0)\in
\mathcal{C}_6$, which defines  the pocket tree: 
\[
\begin{picture}(75,33)
\put(25,27.5){\circle*{3}} 
\put(25,30){$P_{0}$}
\put(12.5,15){\circle*{3}}
\put(25,15){\circle*{3}}
\put(37.5,15){\circle*{3}}
\put(5,15){$P_{1}$}
\put(26,12){$P_{4}$}
\put(40,15){$P_{5}$}

\put(0,2.5){\circle*{3}}
\put(25,2.5){\circle*{3}}
\put(50,2.5){\circle*{3}}
\put(-7.5,2.5){$P_{2}$}
\put(27.5,2.5){$P_{3}$}
\put(52.5,2.5){$P_{6}$}

\put(25,27.5){\line(-1,-1){25}}
\put(14,21){\mbox{\small$b_{0}$}}
\put(2,9){\mbox{\small$b_{1}$}}
\put(6,5.5){\mbox{\small$b_{2}$}}
\put(17.5,17){\mbox{\small$b_{5}$}}
\put(25,27.5){\line(1,-1){25}}
\put(30.5,17){\mbox{\small$b_{8}$}}
\put(33,20.5){\mbox{\small$b_{11}$}}
\put(42.5,5){\mbox{\small$b_{9}$}}
\put(45,8){\mbox{\small$b_{10}$}}
\put(25,27.5){\line(0,-1){12.5}}
\put(21.5,20){\mbox{\small$b_{6}$}}
\put(25.5,20){\mbox{\small$b_{7}$}}
\put(12.5,15){\line(1,-1){12.5}}
\put(17,5.5){\mbox{\small$b_{3}$}}
\put(20,8.5){\mbox{\small$b_{4}$}}
\end{picture}
\]
The edge side labels encode 
\begin{equation*}
G_{b_{0}b_{5}}G_{b_{1}b_{2}}G_{b_{3}b_{4}}G_{b_{6}b_{7}} G_{b_{8}b_{11}}G_{b_{9}b_{10}}\;.
\end{equation*}
For vertex $P_{0}$, at even distance, we draw direct and opposite tree
encoded in $\tilde{e}^{(0)}=(2,0,0)$:
\begin{equation*}
\begin{picture}(80,8)
\put(0.3,2){\textbullet}
\put(10.3,2){\textbullet}
\put(20.3,2){\textbullet}
\put(-1,0){\mbox{\scriptsize$b_0$}}
\put(9,0){\mbox{\scriptsize$b_6$}}
\put(19,0){\mbox{\scriptsize$b_8$}}
\put(6,3){\oval(10,5)[t]}
\put(11,3){\oval(20,10)[t]}
\put(40.3,2){\textbullet}
\put(50.3,2){\textbullet}
\put(60.3,2){\textbullet}
\put(39,0){\mbox{\scriptsize$b_5$}}
\put(49,0){\mbox{\scriptsize$b_7$}}
\put(59,0){\mbox{\scriptsize$b_{11}$}}
\put(46,3){\oval(10,5)[t]}
\put(51,3){\oval(20,10)[t]}
\end{picture}
\end{equation*}
For vertex $P_{1}$, at odd distance, we draw opposite and direct tree
encoded in $\tilde{e}^{(1)}=(1,1,0)$:
\begin{equation*}
\begin{picture}(80,8)
\put(0.3,2){\textbullet}
\put(10.3,2){\textbullet}
\put(20.3,2){\textbullet}
\put(-1,0){\mbox{\scriptsize$b_{0}$}}
\put(9,0){\mbox{\scriptsize$b_{2}$}}
\put(19,0){\mbox{\scriptsize$b_{4}$}}
\put(16,3){\oval(10,5)[t]}
\put(11,3){\oval(20,10)[t]}
\put(40.3,2){\textbullet}
\put(50.3,2){\textbullet}
\put(60.3,2){\textbullet}
\put(39,0){\mbox{\scriptsize$b_{5}$}}
\put(49,0){\mbox{\scriptsize$b_{1}$}}
\put(59,0){\mbox{\scriptsize$b_{3}$}}
\put(56,3){\oval(10,5)[t]}
\put(46,3){\oval(10,5)[t]}
\end{picture}
\end{equation*}
For vertex $P_{5}$, at odd distance, we draw opposite and direct tree
encoded in $\tilde{e}^{(5)}=(1,0)$:
\begin{equation*}
\begin{picture}(80,5)
\put(0.3,2){\textbullet}
\put(10.3,2){\textbullet}
\put(-1,0){\mbox{\scriptsize$b_{8}$}}
\put(9,0){\mbox{\scriptsize$b_{10}$}}
\put(6,3){\oval(10,5)[t]}
\put(40.3,2){\textbullet}
\put(50.3,2){\textbullet}
\put(39,0){\mbox{\scriptsize$b_{11}$}}
\put(49,0){\mbox{\scriptsize$b_{9}$}}
\put(46,3){\oval(10,5)[t]}
\end{picture}
\end{equation*}
They give rise to a factor 
\begin{align*}
&\hspace{-8mm}\frac{1}{(E_{b_{0}}-E_{b_{6}})(E_{b_{0}}-E_{b_{8}})(E_{b_{0}}-E_{b_{4}})(E_{b_{2}}-E_{b_{4}})(E_{b_{8}}-E_{b_{10}})} \\ 
&\times\frac{1}{(E_{b_{5}}-E_{b_{7}})(E_{b_{5}}-E_{b_{11}})(E_{b_{5}}-E_{b_{1}})(E_{b_{1}}-E_{b_{3}})(E_{b_{11}}-E_{b_{9}})}\;.
\end{align*}
Later in Fig.~\ref{f:G12a} we give a diagrammatic 
representation of this nested Catalan table. 
\end{exm}

The following theorem shows that the nested Catalan tables correspond 
bijectively to the terms in the expansion of the recurrence relation 
(\ref{e:rr}).

\begin{thrm}\label{thrm:RC}
The recurrence (\ref{e:rr}) of $N$-point functions in the quartic 
matrix model (\ref{measure}) has the explicit solution 
\begin{equation*}
G^{(0)}_{b_0\ldots b_{N-1}}=\sum_{T\in \mathcal{T}_{k+1}} [T]_{b_0\ldots b_{N-1}}\;,
\end{equation*}
where the sum is over all nested Catalan tables of length $N/2=k+1$ 
and the monomials $[T]_{b_0\ldots b_{N-1}}$ are described in 
Definition \ref{dfnt:CTG}. 
\end{thrm}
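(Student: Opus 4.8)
The natural approach is induction on $N$, showing that $H^{(N)}_{b_0\ldots b_{N-1}}:=\sum_{T\in\mathcal{T}_{N/2}}[T]_{b_0\ldots b_{N-1}}$ obeys the same recursion (\ref{e:rr}) and the same initial data as $G^{(N)}$; since (\ref{e:rr}) determines all higher functions from $G^{(2)}$, this gives $H^{(N)}=G^{(N)}$. The base case $N=2$ is immediate, as $\mathcal{T}_1=\{\langle(0),(0)\rangle\}$ and Definition \ref{dfnt:CTG} assigns to this table the monomial $G_{b_0b_1}=G^{(2)}_{b_0b_1}$. For the inductive step I set $\lambda=-1$ and read the right-hand side of (\ref{e:rr}), with every lower $G$ replaced by the corresponding sum $\sum_T[T]$, as a signed sum over $l$ and over pairs $(T',T'')\in\mathcal{T}_l\times\mathcal{T}_{k+1-l}$ of a direct summand and an opposite summand, each carrying the two denominators $\frac{1}{(E_{b_0}-E_{b_{2l}})(E_{b_1}-E_{b_{N-1}})}$.

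The heart of the argument is two factorization lemmas tying these summands to the operations of Section \ref{sec:Ctab}. First I would show that $\smalllozenge$ (Definition \ref{dfnt:triangle}), which acts on the root pocket through the $\circ$-factorization (\ref{circ-factor}), is a bijection onto the tables with $|\tilde e^{(0)}|\geq1$, using the unique reversion (\ref{k-lozenge}); dually, $\smallblacklozenge$ (Definition \ref{dfnt:box}), acting through the $\bullet$-factorization (\ref{bullet-factor}), is a bijection onto the tables with $|\tilde e^{(1)}|\geq1$, via (\ref{l-blacklozenge}). Since $(1,0,\dots)$ violates the partial-sum condition of Definition \ref{dfnt:cattup}, no Catalan table has both $|\tilde e^{(0)}|=0$ and $|\tilde e^{(1)}|=0$, so these two images already cover $\mathcal{T}_{k+1}$. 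The substantive content of the lemmas is that the edge-side labelling, the level $L_m$, and the parity-driven switch between direct and opposite trees in Definition \ref{dfnt:CTG} all transform predictably when a pocket is split by $\circ$ or $\bullet$; verifying this compatibility, together with the sign produced by the opposite tree that should account for the minus sign of the opposite term, is the routine but lengthy bookkeeping.

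The main obstacle is the overlap, and with it the counting mismatch: the recursion unfolds into $2^kc_k$ monomials whereas only $d_k=|\mathcal{T}_{k+1}|$ survive. For a table $T$ admitting both a $\smalllozenge$- and a $\smallblacklozenge$-decomposition the correspondence is not term-by-term: one checks already for $T=\langle(1,0),(1,0),(0),(0)\rangle$ that $[T]$ differs from its direct summand, the two agreeing only up to replacing one energy denominator $E_{b_1}-E_{b_5}$ by $E_{b_3}-E_{b_5}$ in the otherwise identical expression. Reconciling this is precisely the role of the three-term identity (\ref{rfp}): the surplus recursion monomials, each a product with two adjacent energy denominators, must be recombined via (\ref{rfp}) so as to produce exactly the prescribed Catalan-table monomials. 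I expect the bulk of the work to lie in organizing these applications of (\ref{rfp}) systematically over all overlapping tables, and in checking that the resulting bookkeeping is consistent with the choice of root $b_0$ and with the cyclic invariance established in Section \ref{sec:sym}; the factorization lemmas themselves are then comparatively mechanical.
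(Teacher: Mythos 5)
Your proposal is correct and follows essentially the same route as the paper: induction on $N$ with base case $\mathcal{T}_1$, the unique $\smalllozenge$- and $\smallblacklozenge$-factorizations with their reversions (\ref{k-lozenge}), (\ref{l-blacklozenge}) covering all Catalan tables (the paper's cases [\RNum{1}]--[\RNum{3}], your observation that no table has $|\tilde{e}^{(0)}|=|\tilde{e}^{(1)}|=0$ matching its implicit case split), and the overlap resolved by the three-term identity (\ref{rfp}) --- the paper's (\ref{e:ident-E}) --- which splits the $\smalllozenge$-summand into the table monomial plus a remainder that cancels the $\smallblacklozenge$-summand. Your example $\langle(1,0),(1,0),(0),(0)\rangle$, where the direct summand differs from $[T]$ by the single denominator replacement $E_{b_1}-E_{b_5}\mapsto E_{b_3}-E_{b_5}$, is exactly the general mechanism of case [\RNum{3}]: one application of (\ref{e:ident-E}) per overlapping table, with the thread from $b_h$ to $b_1$ in the direct tree of $\tilde{e}^{(1)}$ supplying the needed factor.
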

\begin{proof}
  We proceed by induction in $N$. For $N=2$ the only term in the
  2-point function corresponds to the nested Catalan table $\langle
  (0),(0)\rangle\in\mathcal{T}_{1}$. Its associated length tuple $(1,0)$ 
encodes the pocket tree
\begin{equation*}
\begin{picture}(20,13)
\put(10,12){\circle*{2.5}}
\put(10,0){\circle*{2.5}}
\put(10,12){\line(0,-1){12}}
\put(6.5,6){\mbox{\scriptsize$b_0$}}
\put(11,6){\mbox{\scriptsize$b_1$}}
\end{picture}
\end{equation*}
whose single edge corresponds to a factor $G_{b_0b_1}$. 
The Catalan tuples of both pockets have length $0$, so that there is no 
denominator.

\smallskip

For any contribution to $G^{(0)}_{b_0...b_{N-1}}$ with $N\geq 4$, encoded by  
a length-$N/2$ nested Catalan table $T_{N/2}$, 
it must be shown that $T_{N/2}$ splits in one or two ways into
smaller nested Catalan tables whose corresponding monomials 
produce $T_{N/2}$ via  (\ref{e:rr}). There are three cases to
consider.

\bigskip

\noindent {\bf Case I:} Let $T_{k+1}= \langle
(0),\tilde{e}^{(1)},\ldots,\tilde{e}^{(k+1)}\rangle \in
\mathcal{T}_{k+1}$ with $N/2=k+1$. 
\\
It follows from Definition \ref{dfnt:box} that there are uniquely defined
nested Catalan tables $T_{l}= \langle
\tilde{f},\tilde{e}^{(2)},\ldots,\tilde{e}^{(l+1)}\rangle\in\mathcal{T}_l$ and $T_{k-l+1}=
\langle (0),\tilde{e},\tilde{e}^{(l+2)},\ldots,\tilde{e}^{(k+1)}\rangle\in\mathcal{T}_{k-l+1}$
with $\tilde{e}^{(1)}=\tilde{e}\bullet \tilde{f}$ and consequently
$T_{k-l+1}\smallblacklozenge T_{l}=T_{k+1}$. The length $l=\hat{l}$ is 
obtained via (\ref{l-blacklozenge}). Recall that $T_{k+1}$
cannot be obtained by the $\smalllozenge$-composition because 
the zeroth pocket has length $|(0)|=0$. By induction, $T_{l}$
encodes a unique contribution $[T_{l}]_{b_{1}\ldots b_{2l}}$ to
$G^{(0)}_{b_{1}\ldots b_{2l}}$, and $T_{k-l+1}$ encodes a unique
contribution $[T_{k-l+1}]_{b_{0}b_{2l+1}\ldots b_{N-1}}$ to
$G^{(0)}_{b_{0}b_{2l+1}\ldots b_{N-1}}$. We have to show that
\begin{equation*}
-\frac{[T_{l}]_{b_{1}\ldots b_{2l}}[T_{k-l+1}]_{b_{0}b_{2l+1}\ldots
    b_{N-1}}}{(E_{b_{0}}-E_{b_{2l}})(E_{b_{1}}-E_{b_{N-1}})}
\end{equation*}
agrees with $[T_{k+1}]_{b_{0}\ldots b_{N-1}}$ encoded by
$T_{k+1}$. A detail of the pocket tree of $T_{k+1}$ 
sketching $P_0,P_1$ and their attached edges is
\begin{align}
\parbox{80mm}{\setlength{\unitlength}{1.25mm}\begin{picture}(60,14)
\put(23,15){\circle*{2}}
\put(25,15){\mbox{\scriptsize$P_{0}$}}
\put(23,15){\line(-4,-1){16}}
\put(7,11){\circle*{2}}
\put(3,12){\mbox{\scriptsize$P_{1}$}}
\put(7,11){\line(-3,-2){10}}
\put(7,11){\line(-1,-3){3.5}}
\put(7,11){\line(1,-2){6.5}}
\put(7,11){\line(2,-1){12}}
\put(12,13.5){\mbox{\scriptsize$b_{0}$}}
\put(15,11.5){\mbox{\scriptsize$b_{N{-}1}$}}
\put(-2,7){\mbox{\scriptsize$b_{1}$}}
\put(-2,3){\mbox{\scriptsize$b_{\overline{1}}$}}
\put(12,8.5){\mbox{\scriptsize$b_{N{-}2}$}}
\put(12.8,4.3){\mbox{\scriptsize$b_{\overline{N{-}2}}$}}
\put(4,-1){\mbox{\scriptsize$b_{2l}$}}
\put(6,1.5){\mbox{\scriptsize$b_{2l{+}1}$}}
\put(13,-0.5){\mbox{\scriptsize$b_{\overline{2l{+}1}}$}}
\put(1,4){\mbox{\scriptsize$b_{\overline{2l}}$}}
\thicklines 
\qbezier[3](2,7)(2.8,6)(4.6,5.5)
\qbezier[3](10.5,5.2)(11.8,5.5)(12.8,7.3)
\end{picture}}
\end{align}
Only the gluing of the
  direct and opposite tree encoded by $\tilde{e}=(e_{0},\ldots,e_{p})$
  with the direct and opposite tree encoded by
  $\tilde{f}=(f_{0},\ldots,f_{q})$ via a thread from $b_{0}$ to
  $b_{2l}$ and a thread from $b_{N-1}$ to $b_{1}$ remains to be shown;
edge sides encoding $G^{(0)}_{b_kb_{l}}$ and all other pockets 
are automatic. A symbolic notation is used now to
  sketch the trees. Horizontal dots are used to indicate a general
  direct tree and horizontal dots with vertical dots above them
  indicate an opposite tree. Unspecified threads are indicated by
  dotted half-edges. The four trees mentioned above are depicted as
\begin{align*}
\setlength{\unitlength}{1.25mm}
\parbox{60mm}{\begin{picture}(40,22)(-5,-1)
\put(-5,18){\mbox{$\text{OT}_{\tilde{e}}=$}}
\put(10.3,15){\textbullet}
\put(15.3,15){\textbullet}
\put(18.75,15.5){\ldots}
\put(20,16.5){\mbox{$\vdots$}}
\put(25.3,15){\textbullet}
\put(9,13){\mbox{\scriptsize$b_{0}$}}
\put(14,13){\mbox{\scriptsize$b_{\overline{2l+1}}$}}
\put(24,13){\mbox{\scriptsize$b_{N-2}$}}
\qbezier(11,16)(18.5,27)(26,16)
\put(-5,4){\mbox{$\text{DT}_{\tilde{e}}=$}}
\put(10.3,2){\textbullet}
\put(15.3,2){\textbullet}
\put(19.25,2.5){\ldots}
\put(25.3,2){\textbullet}
\put(5,0){\mbox{\scriptsize$b_{N-1}$}}
\put(14,0){\mbox{\scriptsize$b_{2l+1}$}}
\put(24,0){\mbox{\scriptsize$b_{\overline{N-2}}$}}
\qbezier(11,3)(13.5,7)(16,3)
\thicklines
\qbezier[4](11,3)(11,6)(14,7)
\qbezier[4](16,3)(16,6)(19,7)
\qbezier[4](23,7)(26,6)(26,3)
\end{picture}}
\parbox{60mm}{\setlength{\unitlength}{1.25mm}\begin{picture}(40,22)(-5,-1)
\put(-6,5){\mbox{$\text{DT}_{\tilde{f}}=$}}
\put(10.3,2){\textbullet}
\put(16.75,2.5){\ldots}
\put(25.3,2){\textbullet}
\put(9,0){\mbox{\scriptsize$b_{1}$}}
\put(24,0){\mbox{\scriptsize$b_{\overline{2l}}$}}
\thicklines
\qbezier[4](11,3)(11,6)(14,7)
\qbezier[4](23,7)(26,6)(26,3)
\thinlines 
\put(-6,17){\mbox{$\text{OT}_{\tilde{f}}=$}}
\put(10.3,15){\textbullet}
\put(16.75,15.5){\ldots}
\put(18.2,16.5){\mbox{$\vdots$}}
\put(25.3,15){\textbullet}
\put(9,13){\mbox{\scriptsize$b_{\overline{1}}$}}
\put(24,13){\mbox{\scriptsize$b_{2l}$}}
\qbezier(11,16)(18.5,27)(26,16)
\end{picture}}
\end{align*}
Here $\tilde{e}$ describes $P_{1}$, at odd distance, so that
even-labelled nodes are connected by the opposite tree. Every edge in
the pocket tree has two sides labelled $b_{r}$ and $b_{s}$, where the
convention of Remark~\ref{rmk:sidelabels} is used when the other side
label does not matter.

The first edge in the pocket tree has side labels
$b_{0}b_{N-1}$ and descends from the root pocket. The following edge
is $b_{\overline{2l+1}}b_{2l+1}$ where $2l+2\leq \overline{2l+1} \leq
N-2$ is an even number. The
final edge is $b_{N-2}b_{\overline{N-2}}$ where $2l+1\leq
\overline{N-2} \leq N-3$ is an odd number.

Next, $\tilde{f}$ encodes $P_{0}$ in the pocket tree belonging to
$[T_{l}]_{b_{1}\ldots b_{2l}}$. It lies at even distance, but, because
the labels at $G^{(0)}_{b_{1}\ldots b_{2l}}$ start with an odd one,
the odd nodes of $\tilde{f}$ are connected by the direct tree and the
even nodes by the opposite tree.  Again, $2\leq \overline{1}\leq 2l $
denotes an even number and $1\leq \overline{2l}\leq 2l-1 $ an odd
number. When pasting $\tilde{f}$ into $\tilde{e}$, the first edge
remains $b_{0}b_{N-1}$, which descends from the root. Then all edges
from $\tilde{f}$ follow and, finally, the remaining edges of
$\tilde{e}$. Thus, before taking the denominators into account, the
four trees are arranged as:
\begin{align}
\setlength{\unitlength}{1.25mm}
\parbox{100mm}{\begin{picture}(40,28)(-5,-1)
\put(0,19){\mbox{$\text{OT}_{\tilde{e}} {\cup} 
\text{OT}_{\tilde{f}}$:}}
\put(20.3,17){\textbullet}
\put(19,15){\mbox{\scriptsize$b_{0}$}}
\put(25.3,17){\textbullet}
\put(31.75,17.5){\ldots}
\put(33,18.5){\mbox{$\vdots$}}
\put(40.55,17){\textbullet}
\put(45.3,17){\textbullet}
\put(24,15){\mbox{\scriptsize$b_{\overline{1}}$}}
\put(39,15){\mbox{\scriptsize$b_{2l}$}}
\put(45.3,17){\textbullet}
\put(49,17.5){\ldots}
\put(50.25,18.5){\mbox{$\vdots$}}
\put(45,15){\mbox{\scriptsize$b_{\overline{2l+1}}$}}
\put(55.55,17){\textbullet}
\put(54,15){\mbox{\scriptsize$b_{N-2}$}}
\qbezier(26,18)(33.5,27)(41,18)
\qbezier(21,18)(38.5,35)(56,18)
\put(0,5){\mbox{$\text{DT}_{\tilde{e}} {\cup} 
\text{DT}_{\tilde{f}}$:}}
\put(20.3,2){\textbullet}
\put(16,0){\mbox{\scriptsize$b_{N-1}$}}
\put(25.3,2){\textbullet}
\put(25,0){\mbox{\scriptsize$b_{1}$}}
\put(31.25,2.5){\ldots}
 \put(40.3,2){\textbullet}
 \put(45.3,2){\textbullet}
 \put(55.3,2){\textbullet}
 \put(44,0){\mbox{\scriptsize$b_{2l+1}$}}
 \put(39,0){\mbox{\scriptsize$b_{\overline{2l}}$}}
 \put(55,0){\mbox{\scriptsize$b_{\overline{N-2}}$}}
\qbezier(21,3)(33.5,18)(46,3)
 \thicklines
\qbezier[4](21,3)(21,6)(24,8)
\qbezier[4](26,3)(26,6)(29,7)
\qbezier[4](38,7)(41,6)(41,3)
\qbezier[4](53,7)(56,6)(56,3)
\thinlines 
\end{picture}}
\end{align}
The denominator of
$\frac{1}{(E_{b_{0}}-E_{b_{2l}})(E_{b_{N-1}}-E_{b_{1}})}$ (with
rearranged sign) corresponds to a thread between the nodes $b_{0}$ and
$b_{2l}$ and one between the nodes $b_{N-1}$ and $b_{1}$: 
\begin{align}
\setlength{\unitlength}{1.25mm}
\parbox{100mm}{\begin{picture}(40,28)(-5,-1)
\put(0,19){\mbox{$\text{OT}_{\tilde{e} \bullet \tilde{f}}$:}}
\put(20.3,17){\textbullet}
\put(19,15){\mbox{\scriptsize$b_{0}$}}
\put(25.3,17){\textbullet}
\put(31.75,17.5){\ldots}
\put(33,18.5){\mbox{$\vdots$}}
\put(40.55,17){\textbullet}
\put(45.4,17){\textbullet}
\put(24,15){\mbox{\scriptsize$b_{\overline{1}}$}}
\put(39,15){\mbox{\scriptsize$b_{2l}$}}
\put(45.3,17){\textbullet}
\put(49,17.5){\ldots}
\put(50.25,18.5){\mbox{$\vdots$}}
\put(45,15){\mbox{\scriptsize$b_{\overline{2l+1}}$}}
\put(55.55,17){\textbullet}
\put(54,15){\mbox{\scriptsize$b_{N-2}$}}
\qbezier(26,18)(33.5,27)(41,18)
\qbezier(21,18)(38.5,35)(56,18)
\qbezier(21,18)(38.5,30)(41,18)
\put(0,5){\mbox{$\text{DT}_{\tilde{e}\bullet \tilde{f}}$:}}
\put(20.3,2){\textbullet}
\put(16,0){\mbox{\scriptsize$b_{N-1}$}}
\put(25.3,2){\textbullet}
\put(25,0){\mbox{\scriptsize$b_{1}$}}
\put(31.25,2.5){\ldots}
 \put(40.3,2){\textbullet}
 \put(45.3,2){\textbullet}
 \put(55.3,2){\textbullet}
 \put(44,0){\mbox{\scriptsize$b_{2l+1}$}}
 \put(39,0){\mbox{\scriptsize$b_{\overline{2l}}$}}
 \put(55,0){\mbox{\scriptsize$b_{\overline{N-2}}$}}
\qbezier(21,3)(33.5,18)(46,3)
\qbezier(21,3)(23.5,5)(26,3)
 \thicklines
\qbezier[4](21,3)(21,6)(24,8)
\qbezier[4](26,3)(26,6)(29,7)
\qbezier[4](38,7)(41,6)(41,3)
\qbezier[4](53,7)(56,6)(56,3)
\thinlines 
\end{picture}}
\end{align}
The result
is precisely described by
$\tilde{e}\bullet\tilde{f}=(e_{0}+1,f_{0},\ldots,f_{q},e_{1},\ldots,e_{p})$
with Definitions \ref{dfnt:rpt} and \ref{dfnt:ot}. Indeed, the
increased zeroth entry corresponds to one additional half-thread
attached to the first node $b_{N-1}$ and one additional half-thread to
$b_{0}$. For the direct tree the rules imply that the next node,
$b_{1}$, is connected to $b_{N-1}$. This is the new thread from the
denominators. The next operations are done within $\tilde{f}$,
labelled $b_{1},\ldots,b_{\overline{2l}}$, without any
change. Arriving at its final node $b_{\overline{2l}}$ all
half-threads of $\tilde{f}$ are connected. The next node, labelled
$b_{2l+1}$, connects to the previous open half-thread, which is the
very first node $b_{N-1}$. These and all the following connections
arise within $\tilde{e}$ and remain unchanged. Similarly, in the
opposite tree, we first open $e_{0}+1$ half-threads at the zeroth node
$b_{0}$. Since $f_{0}>0$, we subsequently open $f_{0}$ half-threads at
the first node $b_{\overline{1}}$. The next operations remain
unchanged, until we arrive at the final node $b_{2l}$ of
$\tilde{f}$. It corresponds to $f_{q}=0$, so that we connect it to all
previous open half-threads, first within $\tilde{f}$. However, because
$e_{0}+1>0$, it is connected by an additional thread to $b_{0}$ and
encodes the denominator of $\frac{1}{E_{b_{0}}-E_{b_{2l}}}$. This
consumes the additional half-thread attached to $b_{0}$. All further
connections are the same as within $\tilde{e}$. In conclusion, we
obtain precisely the nested Catalan table
$T_{k+1}=\langle(0),\tilde{e}^{(1)}\ldots \tilde{e}^{(N/2)}\rangle$ we
started with.

\bigskip

\noindent {\bf Case II:} Let $T_{k+1}= \langle
\tilde{e}^{(0)},(0),\tilde{e}^{(2)},\ldots,\tilde{e}^{(k+1)}\rangle
\in \mathcal{T}_{k+1}$ and
$N/2=k+1$. \\
There are uniquely defined nested Catalan tables $T_{l}= \langle
\tilde{e},(0),\tilde{e}^{(2)},\ldots,\tilde{e}^{(l)}\rangle
\in \mathcal{T}_{l}$ and
$T_{k-l+1}= \langle
\tilde{f},\tilde{e}^{(l+1)},\ldots,\tilde{e}^{(k+1)}\rangle
\in \mathcal{T}_{k-l+1}$  with
$\tilde{e}^{(0)}=\tilde{e}\circ \tilde{f}$ and, consequently, $T_{l}
\smalllozenge T_{k-l+1}=T_{k+1}$. The length $l=\hat{k}$ is 
obtained via (\ref{k-lozenge}). Recall that $T_{k+1}$ cannot be
obtained by the $\smallblacklozenge$-composition, because the 1$^{\text{st}}$
entry has length $|(0)|=0$. By the induction hypothesis, $T_{l}$
encodes a unique contribution $[T_{l}]_{b_{0}\ldots b_{2l-1}}$ to
$G^{(0)}_{b_{0}\ldots b_{2l-1}}$ and $T_{k-l+1}$ encodes a unique
contribution $[T_{k-l+1}]_{b_{2l}\ldots b_{N-1}}$ to
$G^{(0)}_{b_{2l}\ldots b_{N-1}}$. It remains to be shown that
\begin{equation*}
\frac{[T_{l}]_{b_{0}\ldots b_{2l-1}}[T_{k-l+1}]_{b_{2l}\ldots
    b_{N-1}}}{(E_{b_{0}}-E_{b_{2l}})(E_{b_{1}}-E_{b_{N-1}})}
\end{equation*}
agrees with $[T_{k+1}]_{b_{0}\ldots b_{N-1}}$ encoded by $T_{k+1}$.
A detail of the pocket tree of $T_{k+1}$ 
sketching $P_0,P_1$ and their attached edges is
\begin{align}
\parbox{80mm}{\setlength{\unitlength}{1.25mm}\begin{picture}(60,12)
\put(31,12){\circle*{2}}
\put(27,13){\mbox{\scriptsize$P_{0}$}}
\put(31,12){\line(-4,-1){16}}
\put(31,12){\line(-1,-1){10}}
\put(31,12){\line(1,-6){2}}
\put(31,12){\line(4,-3){12}}
\put(31,12){\line(6,-1){20}}
\put(15,8){\circle*{2}}
\put(11,7){\mbox{\scriptsize$P_{1}$}}
\put(19,10){\mbox{\scriptsize$b_{0}$}}
\put(19,7){\mbox{\scriptsize$b_{1}$}}
\put(23,6.5){\mbox{\scriptsize$b_{2}$}}
\put(25,4){\mbox{\scriptsize$b_{\overline{2}}$}}
\put(27,0){\mbox{\scriptsize$b_{\overline{2l{-}1}}$}}
\put(33,2){\mbox{\scriptsize$b_{2l{-}1}$}}
\put(35,6){\mbox{\scriptsize$b_{2l}$}}
\put(40.5,5){\mbox{\scriptsize$b_{\overline{2l}}$}}
\put(42,8){\mbox{\scriptsize$b_{\overline{N{-}1}}$}}
\put(40,11){\mbox{\scriptsize$b_{N{-}1}$}}
\thicklines 
\qbezier[3](39,7.5)(40.2,8.2)(40.8,9.8)
\qbezier[3](28,8)(29,7)(31,7)
\end{picture}}
\end{align}
As in {\bf Case I} only the gluing of the direct and opposite tree
encoded by $\tilde{e}=(e_{0},\ldots,e_{p})$ with the direct and
opposite tree encoded by $\tilde{f}=(f_{0},\ldots,f_{q})$ via a thread
from $b_{0}$ to $b_{2l}$ and a thread from $b_{1}$ to $b_{N-1}$ must
be demonstrated. Everything else is automatic. These trees are
\begin{align}
\setlength{\unitlength}{1.25mm}\hspace*{7mm}
\parbox{55mm}{\begin{picture}(40,20)
\put(-5,5){\mbox{$\text{OT}_{\tilde{e}}=$}}
\put(10.3,2){\textbullet}
\put(15.3,2){\textbullet}
\put(19,2.5){\ldots}
\put(20.25,3.5){\mbox{$\vdots$}}
\put(25.3,2){\textbullet}
\put(9,0){\mbox{\scriptsize$b_{1}$}}
\put(14,0){\mbox{\scriptsize$b_{\overline{2}}$}}
\put(24,0){\mbox{\scriptsize$b_{2l-1}$}}
\qbezier(11,3)(18.5,14)(26,3)
\put(-5,17){\mbox{$\text{DT}_{\tilde{e}}=$}}
\put(10.3,15){\textbullet}
\put(15.55,15){\textbullet}
\put(19,15.5){\ldots}
\put(25.3,15){\textbullet}
\put(9,13){\mbox{\scriptsize$b_{0}$}}
\put(15,13){\mbox{\scriptsize$b_{2}$}}
\put(24,13){\mbox{\scriptsize$b_{\overline{2l-1}}$}}
\qbezier(11,16)(13.5,20)(16,16)
\thicklines
\qbezier[4](11,16)(11,19)(14,20)
\qbezier[4](16,16)(16,19)(19,20)
\qbezier[4](23,20)(26,19)(26,16)
\end{picture}}
\parbox{50mm}{\setlength{\unitlength}{1.25mm}\begin{picture}(40,25)(-5,-1)
\put(-6,17){\mbox{$\text{DT}_{\tilde{f}}=$}}
\put(10.3,15){\textbullet}
\put(16.75,15.5){\ldots}
\put(25.3,15){\textbullet}
\put(9,13){\mbox{\scriptsize$b_{2l}$}}
\put(24,13){\mbox{\scriptsize$b_{\overline{N-1}}$}}
\thicklines
\qbezier[4](11,16)(11,19)(14,20)
\qbezier[4](23,20)(26,19)(26,16)
\thinlines 
\put(-6,5){\mbox{$\text{OT}_{\tilde{f}}=$}}
\put(10.3,2){\textbullet}
\put(16.75,2.5){\ldots}
\put(18,3.5){\mbox{$\vdots$}}
\put(25.3,2){\textbullet}
\put(9,0){\mbox{\scriptsize$b_{\overline{2l}}$}}
\put(24,0){\mbox{\scriptsize$b_{N-1}$}}
\qbezier(11,3)(18.5,14)(26,3)
\end{picture}}
\label{e:case-2sep}
\end{align}
The notation is the same as in {\bf Case I}. The 1$^{\text{st}}$ pocket
$P_{1}$, described by the Catalan tuple $(0)$, is only $1$-valent so
that the first edge is labelled $b_{0}b_{1}$. The direct trees in
(\ref{e:case-2sep}) are put next to each other and a thread between
$b_{0}$ and $b_{2l}$ is drawn for the denominator of
$\frac{1}{E_{b_{0}}-E_{b_{2l}}}$. Similarly, the opposite trees in
(\ref{e:case-2sep}) are put next to each other and a thread between
$b_{1}$ and $b_{N-1}$ is drawn for the denominator of
$\frac{1}{E_{b_{1}}-E_{b_{N-1}}}$:
\begin{align*}
\parbox{100mm}{\setlength{\unitlength}{1.25mm}\begin{picture}(40,25)(-5,-1)
\put(-5,5){\mbox{$\text{OT}_{\tilde{e}\circ \tilde{f}}=$}}
\put(15.55,2){\textbullet}
\put(20.3,2){\textbullet}
\put(23.75,2.5){\ldots}
\put(25,3.5){\mbox{$\vdots$}}
\put(30.3,2){\textbullet}
\put(14,0){\mbox{\scriptsize$b_{1}$}}
\put(19,0){\mbox{\scriptsize$b_{\overline{2}}$}}
\put(29,0){\mbox{\scriptsize$b_{2l-1}$}}
\qbezier(16,3)(23.5,14)(31,3)
\put(40.3,2){\textbullet}
\put(46.75,2.5){\ldots}
\put(48,3.5){\mbox{$\vdots$}}
\put(55.8,2){\textbullet}
\put(39,0){\mbox{\scriptsize$b_{\overline{2l}}$}}
\put(54,0){\mbox{\scriptsize$b_{N-1}$}}
\qbezier(41,3)(48.5,14)(56,3)
\put(28.5,16){\oval(25,13)[t]}
\put(-5,17){\mbox{$\text{DT}_{\tilde{e}\circ\tilde{f}}=$}}
\put(15.55,15){\textbullet}
\put(20.55,15){\textbullet}
\put(23.75,15.5){\ldots}
\put(30.3,15){\textbullet}
\put(14,13){\mbox{\scriptsize$b_{0}$}}
\put(20,13){\mbox{\scriptsize$b_{2}$}}
\put(29,13){\mbox{\scriptsize$b_{\overline{2l-1}}$}}
\qbezier(16,16)(18.5,20)(21,16)
\thicklines
\qbezier[4](16,16)(16,19)(19,20)
\qbezier[4](21,16)(21,19)(24,20)
\qbezier[4](28,20)(31,19)(31,16)
\put(40.8,15){\textbullet}
\put(46.75,15.5){\ldots}
\put(55.8,15){\textbullet}
\put(39,13){\mbox{\scriptsize$b_{2l}$}}
\put(54,13){\mbox{\scriptsize$b_{\overline{N-1}}$}}
\thicklines
\qbezier[4](41,16)(41,19)(44,20)
\qbezier[4](53,20)(56,19)(56,16)
\thinlines 
\put(36,3){\oval(40,15)[t]}
\end{picture}}
\end{align*}
The result are precisely the direct and opposite trees of
the composition
$\tilde{e}\circ\tilde{f}=(e_{0}+1,e_{1},\ldots,e_{p},f_{0},\ldots,f_{q})$. The
increase $e_{0}\rightarrow e_{0}+1$ opens an additional half-thread at
$b_{0}$ and an additional half-thread at $b_{1}$. In the direct tree,
this new half-thread is not used by $e_{1},\ldots,e_{p}$. Only when we
are moving to $f_{0}$, labelled $b_{2l}$, we have to connect it with
the last open half-thread, i.e.\ with $b_{0}$. After that the remaining
operations are unchanged compared with $\tilde{f}$. In the opposite
tree, the additional half-thread at $b_{1}$ is not used in
$e_{1},\ldots,e_{p}$. Because $f_{0}$, labelled
$b_{\overline{2l}}$, opens enough half-threads, it is not consumed by
$f_{0},\ldots,f_{q-1}$ either. Then, the last node $f_{q}$, labelled
$b_{N-1}$, successively connects to all nodes with open half-threads,
including $b_{1}$. In conclusion, we obtain precisely the nested Catalan
table $T_{k+1}=\langle\tilde{e}^{(0)},(0),\tilde{e}^{(2)}\ldots
\tilde{e}^{(N/2)}\rangle$ we started with.

\bigskip

\noindent {\bf Case III:} Finally, we consider a general
$T_{k+1}=\langle\tilde{e}^{(0)},\tilde{e}^{(1)},\tilde{e}^{(2)},\ldots
\tilde{e}^{(k+1)}\rangle\in\mathcal{T}_{k+1}$ with $k+1=N/2$, $|\tilde{e}^{(0)}|\geq1 $
and $|\tilde{e}^{(1)}|\geq 1$.
There are uniquely defined nested Catalan tables $T_{l}= \langle
\tilde{e},\tilde{e}^{(1)},\tilde{e}^{(2)},\ldots,\tilde{e}^{(l)}\rangle\in\mathcal{T}_l$
and $T_{k-l+1}= \langle
\tilde{f},\tilde{e}^{(l+1)},\ldots,\tilde{e}^{(k+1)}\rangle\in\mathcal{T}_{k-l+1}$ with
$\tilde{e}^{(0)}=\tilde{e}\circ \tilde{f}$ and consequently $T_{l}
\smalllozenge T_{k-l+1}=T_{k+1}$. Moreover, uniquely defined nested Catalan
tables $T_{l'}= \langle
\tilde{f}',\tilde{e}^{(2)},\ldots,\tilde{e}^{(l'+1)}\rangle\in\mathcal{T}_{l'}$ and
$T_{k-l'+1}= \langle
\tilde{e}^{(0)},\tilde{e}',\tilde{e}^{(l'+2)},\ldots,\tilde{e}^{(k+1)}\rangle\in\mathcal{T}_{k-l'+1}$
exist, such that $\tilde{e}^{(1)}=\tilde{e}'\bullet \tilde{f}'$ and
consequently $T_{k-l'+1}\smallblacklozenge T_{l'}=T_{k+1}$. We
necessarily have $l'\leq k-1$ and $l\geq 2$, because $l'=k$
corresponds to {\bf Case I} and $l=1$ to {\bf Case II}. By the
induction hypothesis, these nested Catalan subtables encode unique
contributions $[T_{l}]_{b_{0}\ldots b_{2l-1}}$ to
$G^{(0)}_{b_{0}\ldots b_{2l-1}}$, $[T_{k-l+1}]_{b_{2l}\ldots
  b_{N-1}}$ to $G^{(0)}_{b_{2l}\ldots b_{N-1}}$,
$[T_{l'}]_{b_{1}\ldots b_{2l'}}$ to $G^{(0)}_{b_{1}\ldots b_{2l'}}$
and $[T_{k-l'+1}]_{b_{0}b_{2l'+1}\ldots b_{N-1}}$ to
$G^{(0)}_{b_{0}b_{2l'+1}\ldots b_{N-1}}$. We have to show that
\begin{align}
&\hspace{-8mm}\frac{[T_{l}]_{b_{0}\ldots b_{2l-1}}[T_{N/2-l}]_{b_{2l}\ldots b_{N-1}}}{(E_{b_{0}}-E_{b_{2l}})(E_{b_{1}}-E_{b_{N-1}})}-\frac{[T_{l'}]_{b_{1}\ldots b_{2l'}}[T_{N/2-l'}]_{b_{0}b_{2l'+1}\ldots b_{N-1}}}{(E_{b_{0}}-E_{b_{2l'}})(E_{b_{1}}-E_{b_{N-1}})}\label{e:case3-G}
\end{align}
agrees with $[T_{k+1}]_{b_{0}\ldots,b_{N-1}}$. 

\smallskip

In the pocket tree of $T_{k+1}$ there must be an edge with side labels 
$b_{0}b_{h}$, where $3\leq h\leq N-3$ and $h$ is odd. Here is a detail
of the pocket tree of $T_{k+1}$ showing $P_0,P_1$:
\begin{align}
\parbox{80mm}{\setlength{\unitlength}{1.25mm}\begin{picture}(60,17)
\put(31,17){\circle*{2}}
\put(27,18){\mbox{\scriptsize$P_{0}$}}
\put(31,17){\line(-4,-1){24}}
\put(31,17){\line(-1,-1){10}}
\put(31,17){\line(1,-6){2}}
\put(31,17){\line(4,-3){12}}
\put(31,17){\line(6,-1){20}}
\put(7,11){\circle*{2}}
\put(3,12){\mbox{\scriptsize$P_{1}$}}
\put(7,11){\line(-3,-2){10}}
\put(7,11){\line(-1,-3){3.5}}
\put(7,11){\line(1,-2){6.5}}
\put(7,11){\line(2,-1){12}}
\put(12,13.5){\mbox{\scriptsize$b_{0}$}}
\put(15,11.5){\mbox{\scriptsize$b_{h}$}}
\put(-2,7){\mbox{\scriptsize$b_{1}$}}
\put(-2,3){\mbox{\scriptsize$b_{\overline{1}}$}}
\put(12,8.5){\mbox{\scriptsize$b_{h{-}1}$}}
\put(13,4.5){\mbox{\scriptsize$b_{\overline{h{-}1}}$}}
\put(20,11.5){\mbox{\scriptsize$b_{h{+}1}$}}
\put(25,9){\mbox{\scriptsize$b_{\overline{h{+}1}}$}}
\put(27,5){\mbox{\scriptsize$b_{\overline{2l{-}1}}$}}
\put(33,7){\mbox{\scriptsize$b_{2l{-}1}$}}
\put(35,11){\mbox{\scriptsize$b_{2l}$}}
\put(40.5,10){\mbox{\scriptsize$b_{\overline{2l}}$}}
\put(42,13){\mbox{\scriptsize$b_{\overline{N{-}1}}$}}
\put(40,16){\mbox{\scriptsize$b_{N{-}1}$}}
\put(3.5,-1){\mbox{\scriptsize$b_{2l'}$}}
\put(5.8,1.5){\mbox{\scriptsize$b_{2l'\!{+}1}$}}
\put(13,-0.5){\mbox{\scriptsize$b_{\overline{2l'{+}1}}$}}
\put(1,4){\mbox{\scriptsize$b_{\overline{2l'}}$}}
\thicklines 
\qbezier[3](39,12.5)(40.2,13.2)(40.8,14.8)
\qbezier[3](28,13)(29,12)(31,12)
\qbezier[3](2,7)(2.8,6)(4.6,5.5)
\qbezier[3](10.5,5.2)(11.8,5.5)(12.8,7.3)
\end{picture}}
\end{align}
The direct and
opposite trees for $\tilde{e},\tilde{f}$ and $\tilde{e}^{(1)}$ can 
be sketched as
\begin{align}
\parbox{90mm}{\setlength{\unitlength}{1.25mm}\hspace*{10mm}
\begin{picture}(60,25)(8,-1)
\put(0,5){\mbox{$\text{OT}_{\tilde{e}}{\cup}\text{OT}_{\tilde{f}}=$}}
\put(20.3,2){\textbullet}
\put(25.3,2){\textbullet}
\put(29,2.5){\ldots}
\put(30.25,3.5){\mbox{$\vdots$}}
\put(35.3,2){\textbullet}
\put(19,0){\mbox{\scriptsize$b_{h}$}}
\put(24,0){\mbox{\scriptsize$b_{\overline{h+1}}$}}
\put(34,0){\mbox{\scriptsize$b_{2l-1}$}}
\qbezier(21,3)(28.5,14)(36,3)
\put(45.3,2){\textbullet}
\put(51.75,2.5){\ldots}
\put(53.1,3.5){\mbox{$\vdots$}}
\put(60.3,2){\textbullet}
\put(44,0){\mbox{\scriptsize$b_{\overline{2l}}$}}
\put(59,0){\mbox{\scriptsize$b_{N-1}$}}
\qbezier(46,3)(53.5,14)(61,3)
\put(45.3,15){\textbullet}
\put(51.57,15.5){\ldots}
\put(60.3,15){\textbullet}
\put(44,13){\mbox{\scriptsize$b_{2l}$}}
\put(59,13){\mbox{\scriptsize$b_{\overline{N-1}}$}}
\thicklines
\qbezier[4](46,16)(46,19)(49,20)
\qbezier[4](58,20)(61,19)(61,16)
\thinlines 
\put(0,17){\mbox{$\text{DT}_{\tilde{e}}{\cup}\text{DT}_{\tilde{f}}=$}}
\put(20.3,15){\textbullet}
\put(25.3,15){\textbullet}
\put(29,15.5){\ldots}
\put(35.3,15){\textbullet}
\put(19,13){\mbox{\scriptsize$b_{0}$}}
\put(24,13){\mbox{\scriptsize$b_{h+1}$}}
\put(34,13){\mbox{\scriptsize$b_{\overline{2l-1}}$}}
\qbezier(21,16)(23.5,20)(26,16)
\thicklines
\qbezier[4](21,16)(21,19)(24,20)
\qbezier[4](26,16)(26,19)(29,20)
\qbezier[4](33,20)(36,19)(36,16)
\end{picture}}\quad
\parbox{36mm}{\setlength{\unitlength}{1.25mm}\begin{picture}(20,25)(8,-1)
\put(3,17){\mbox{$\text{OT}_{\tilde{e}^{(1)}}=$}}
\put(15.3,15){\textbullet}
\put(20.3,15){\textbullet}
\put(23.75,15.5){\ldots}
\put(25,16.5){\mbox{$\vdots$}}
\put(30.3,15){\textbullet}
\put(14,13){\mbox{\scriptsize$b_{0}$}}
\put(20,13){\mbox{\scriptsize$b_{\overline{1}}$}}
\put(29,13){\mbox{\scriptsize$b_{h-1}$}}
\qbezier(16,16)(23.5,27)(31,16)
\put(3,5){\mbox{$\text{DT}_{\tilde{e}^{(1)}}=$}}
\put(15.3,2){\textbullet}
\put(20.3,2){\textbullet}
\put(23.75,2.5){\ldots}
\put(30.3,2){\textbullet}
\put(14,0){\mbox{\scriptsize$b_{h}$}}
\put(20,0){\mbox{\scriptsize$b_{1}$}}
\put(29,0){\mbox{\scriptsize$b_{\overline{h-1}}$}}
\qbezier(16,3)(18.5,7)(21,3)
\thicklines
\qbezier[4](16,3)(16,6)(19,7)
\qbezier[4](21,3)(21,6)(24,7)
\qbezier[4](28,7)(31,6)(31,3)
\end{picture}\rule[-2mm]{0mm}{1cm}}
\label{e:case-3sepa}
\end{align}
The denominators of
$\frac{1}{(E_{b_{0}}-E_{b_{2l}})(E_{b_{1}}-E_{b_{N-1}})}$ in
(\ref{e:case3-G}) add threads from $b_{0}$ to $b_{2l}$ and from
$b_{1}$ to $b_{N-1}$. The first one connects the direct trees for
$\tilde{e}\cup \tilde{f}$ to the direct tree encoded by
$\tilde{e}^{(0)}=\tilde{e}\circ \tilde{f}$. The second thread does \emph{not}
give a valid composition of the opposite trees for $\tilde{e}\cup
\tilde{f}$.

This is a problem. The solution is to split this
contribution. Half of the contribution is sacrificed to bring the
other half in the desired form. Afterwards, the same procedure is
repeated for the other term in (\ref{e:case3-G}) with a
minus-sign. The remainders are the same and cancel each other, whereas
the other halves add up to yield the sought for monomial.

Returning to trees, we note that in the direct tree for the pocket
$\tilde{e}^{(1)}$ there is always a thread from $b_{h}$ to $b_{1}$,
encoding a factor $\frac{1}{E_{b_{h}}-E_{b_{1}}}$. With the factor
$\frac{1}{E_{b_{1}}-E_{b_{N-1}}}$ it fulfils
\begin{equation}
\frac{1}{E_{b_{h}}{-}E_{b_{1}}}\cdot \frac{1}{E_{b_{1}}{-}E_{b_{N-1}}}
= \frac{1}{E_{b_{h}}{-}E_{b_{1}}}\cdot
\frac{1}{E_{b_{h}}{-}E_{b_{N-1}}}
+\frac{1}{E_{b_{h}}{-}E_{b_{N-1}}}\cdot
\frac{1}{E_{b_{1}}{-}E_{b_{N-1}}}\;.
\label{e:ident-E}
\end{equation}
The first term on the right-hand side of (\ref{e:ident-E}) leaves the
direct tree $\text{DT}_{\tilde{e}^{(1)}}$ as it is and connects the
parts of $\text{OT}_{\tilde{e}}{\cup}\text{OT}_{\tilde{f}}$ via the
thread from $b_{h}$ to $b_{N-1}$ to form
$\text{OT}_{\tilde{e}^{(0)}}$, where $\tilde{e}^{(0)}=\tilde{e}\circ\tilde{f}$.

\smallskip

\noindent
The final term in (\ref{e:ident-E}) also unites
$\text{OT}_{\tilde{e}}{\cup}\text{OT}_{\tilde{f}}$ and forms
$\text{OT}_{\tilde{e}^{(0)}}$, but it removes in
$\text{DT}_{\tilde{e}^{(1)}}$ the thread between $b_{h}$ and
$b_{1}$. It follows from $\tilde{e}^{(1)}=\tilde{e}'\bullet
\tilde{f}'$ that this tree falls apart into the subtrees
$\text{DT}_{\tilde{e}'}$, containing $b_{h}$, and
$\text{DT}_{\tilde{f}'}$, which contains $b_{1}$. These are multiplied
by a factor $\frac{1}{E_{b_{1}}-E_{b_{N-1}}}$. The second term in
(\ref{e:case3-G}) will remove them.

Indeed, direct and opposite trees for $\tilde{e}^{(0)},\tilde{e}'$ 
and $\tilde{f}'$ can be sketched as 
\begin{align}
\parbox{50mm}{\setlength{\unitlength}{1.25mm}\begin{picture}(30,25)(7,-1)
\put(2,5){\mbox{$\text{OT}_{\tilde{e}^{(0)}}=$}}
\put(15.3,2){\textbullet}
\put(20.3,2){\textbullet}
\put(23.75,2.5){\ldots}
\put(25.2,3.5){\mbox{$\vdots$}}
\put(30.3,2){\textbullet}
\put(14,0){\mbox{\scriptsize$b_{h}$}}
\put(19,0){\mbox{\scriptsize$b_{\overline{h+1}}$}}
\put(29,0){\mbox{\scriptsize$b_{N-1}$}}
\qbezier(16,3)(23.5,14)(31,3)
\put(2,17){\mbox{$\text{DT}_{\tilde{e}^{(0)}}=$}}
\put(15.3,15){\textbullet}
\put(20.3,15){\textbullet}
\put(23.75,15.5){\ldots}
\put(30,15){\textbullet}
\put(15,13){\mbox{\scriptsize$b_{0}$}}
\put(19,13){\mbox{\scriptsize$b_{h+1}$}}
\put(29,13){\mbox{\scriptsize$b_{\overline{N-1}}$}}
\qbezier(16,16)(18.5,20)(21,16)
\thicklines
\qbezier[4](16,16)(16,19)(19,20)
\qbezier[4](21,16)(21,19)(24,20)
\qbezier[4](28,20)(31,19)(31,16)
\end{picture}}\quad
\parbox{60mm}{\setlength{\unitlength}{1.25mm}\begin{picture}(40,25)(14,-1)
\put(5,17){\mbox{$\text{OT}_{\tilde{e}'}
{\cup} \text{OT}_{\tilde{f}'}\,{=}$}}
\put(25.3,15){\textbullet}
\put(30.3,15){\textbullet}
\put(33.75,15.5){\ldots}
\put(35.2,16.3){\mbox{$\vdots$}}
\put(40.3,15){\textbullet}
\put(50.3,15){\textbullet}
\put(60.3,15){\textbullet}
\put(53.75,15.5){\ldots}
\put(55.3,16.5){\mbox{$\vdots$}}
\put(24,13){\mbox{\scriptsize$b_{0}$}}
\put(30,13){\mbox{\scriptsize$b_{\overline{1}}$}}
\put(39,13){\mbox{\scriptsize$b_{2l'}$}}
\put(47,13){\mbox{\scriptsize$b_{\overline{2l'+1}}$}}
\put(59,13){\mbox{\scriptsize$b_{h-1}$}}
\qbezier(31,16)(35,24)(41,16)
\put(43.5,16){\oval(35,12)[t]}
\put(5,5){\mbox{$\text{DT}_{\tilde{e}'}
{\cup} \text{DT}_{\tilde{f}'}\,{=}$}}
\put(25.3,2){\textbullet}
\put(30.3,2){\textbullet}
\put(33.75,2.5){\ldots}
\put(40.3,2){\textbullet}
\put(50.3,2){\textbullet}
\put(60.3,2){\textbullet}
\put(24,0){\mbox{\scriptsize$b_{h}$}}
\put(30,0){\mbox{\scriptsize$b_{1}$}}
\put(39,0){\mbox{\scriptsize$b_{\overline{2l}}$}}
\put(47,0){\mbox{\scriptsize$b_{2l'+1}$}}
\put(59,0){\mbox{\scriptsize$b_{\overline{h-1}}$}}
\put(38.5,3){\oval(25,12)[t]}
\thicklines
\qbezier[4](26,3)(26,6)(27,8)
\qbezier[4](31,3)(31,6)(34,7)
\qbezier[4](38,7)(41,6)(41,3)
\qbezier[4](51,3)(51,6)(54,7)
\qbezier[4](58,7)(61,6)(61,3)
\end{picture}}
\end{align}
The direct tree $\text{DT}_{\tilde{e}^{(0)}}$ remains intact and the
thread from $b_{0}$ to $b_{2l'}$ encoded in the factor
$\frac{1}{(E_{b_{0}}-E_{b_{2l'}})}$ in (\ref{e:case3-G}) connects the
opposite trees for $\tilde{e}'\cup \tilde{f}'$ to form the opposite
tree for $\tilde{e}^{(1)}=\tilde{e}'\bullet \tilde{f}'$. The direct
trees $\text{DT}_{\tilde{e}'}{\cup} \text{DT}_{\tilde{f}'}$ remain
disconnected and are multiplied by $\frac{1}{(E_{b_{1}}-E_{b_{N-1}})}$
from (\ref{e:case3-G}). With the minus-sign from (\ref{e:case3-G})
they cancel the final term in (\ref{e:ident-E}). The other trees
combined yield precisely the direct and opposite trees for both
$\tilde{e}^{(0)}$ and $\tilde{e}^{(1)}$, so that the single nested
Catalan table we started with is retrieved.

This completes the proof. Bijectivity between nested Catalan tables
and contributing terms to $(N'{<}N)$-point functions is essential:
Assuming the above construction in {\bf Cases I--III} missed nested
Catalan subtables $T_{l}, T_{N/2-l}$, then their composition
$T_{l} \smalllozenge T_{N/2-l}$ would be a new nested Catalan table of length
$N/2$. However, all nested Catalan tables of length $N/2$ are
considered. Similarly for $T_{l'} \smallblacklozenge T_{N/2-l'}$.
\end{proof}

This theorem shows that there is a one-to-one correspondence between
nested Catalan tables and the diagrams/terms in
$G^{(0)}_{b_{0}\ldots b_{N-1}}$ with designated node $b_{0}$. The
choice of designated node does not influence
$G^{(0)}_{b_{0}\ldots b_{N-1}}$, but it does alter its expansion.

\begin{appendix}

\section{Examples}
\label{app:ex}

\begin{exm}
We have $(1,0)=(0)\circ(0)$,  $(2,0,0)=(1,0)\circ(0)$,  
$(1,1,0)=(0)\circ(1,0)$ and 
$(3,1,0,0,2,0,0)=(2,1,0,0)\circ(2,0,0)$.
\end{exm}

\begin{exm}
We have $(1,0)=(0)\bullet(0)$,  $(2,0,0)=(1,0)\bullet (0)$,  
$(1,1,0)=(0)\bullet (1,0)$ and 
$(3,1,0,0,2,0,0)=(2,0,2,0,0)\bullet(1,0)$.
\end{exm}

\begin{rmk}\label{Dyck-circ}
  We formulate the $\circ$-decomposition in terms of
  Dyck paths. 
  \begin{enumerate}\item  Remove the lowest row of the lattice.
\item     Draw the
  north-east diagonal from the new bottom-left corner.  Let $F$ be the
  first step east which goes below the north-east diagonal.
\item  Remove the column containing $F$. 

\item  The left
  $\circ$-factor is the Dyck path in the lattice obtained by
  retaining only the rows and columns
  shared by the part of the north-east diagonal left of $F$.

\item The
  right $\circ$-factor is the Dyck path in the lattice obtained by
  deleting (in addition to steps 1.\ and 3.)  all rows and columns shared by
  the part of the north-east diagonal left of $F$.

\end{enumerate}
If the resulting lattice in one of the factors is empty this
  corresponds to the Catalan tuple $(0)$ of length $0$.
 For example, the decomposition $(3,1,0,0,2,0,0)=(2,1,0,0)\circ(2,0,0)$
visualises as

\centerline{$
  \parbox{32mm}{\begin{picture}(30,32)                
\put(0,0){\line(1,0){30}}
\put(0,5){\line(1,0){30}}
\put(0,10){\line(1,0){30}}
\put(0,15){\line(1,0){30}}
\put(0,20){\line(1,0){30}}
\put(0,25){\line(1,0){30}}
\put(0,30){\line(1,0){30}}
\put(0,0){\line(0,1){30}}
\put(5,0){\line(0,1){30}}
\put(10,0){\line(0,1){30}}
\put(15,0){\line(0,1){30}}
\put(20,0){\line(0,1){30}}
\put(25,0){\line(0,1){30}}
\put(30,0){\line(0,1){30}}
\linethickness{0.7 mm}
\put(0,0){\line(0,1){15}}
\put(0,15){\line(1,0){5}}
\put(5,15){\line(0,1){5}}
\put(5,20){\line(1,0){15}}
\put(20,20){\line(0,1){10}}
\put(20,30){\line(1,0){10}}
\end{picture}}
=\;\;
\parbox{32mm}{\begin{picture}(30,32)                
 \definecolor{rulecolor}{named}{lightgray}
\put(0,20){\rule{30mm}{10mm}}
 \put(15,0){\rule{15mm}{30mm}}
\definecolor{rulecolor}{named}{gray}
\put(0,0){\rule{30mm}{5mm}}
\put(15,0){\rule{5mm}{30mm}}
\multiput(0,5)(1,1){15}{.}
\put(0,0){\line(1,0){30}}  
\put(0,5){\line(1,0){30}}
\put(0,10){\line(1,0){30}}
\put(0,15){\line(1,0){30}}
\put(0,20){\line(1,0){30}}
\put(0,25){\line(1,0){30}}
\put(0,30){\line(1,0){30}}
\put(0,0){\line(0,1){30}}
\put(5,0){\line(0,1){30}}
\put(10,0){\line(0,1){30}}
\put(15,0){\line(0,1){30}}
\put(20,0){\line(0,1){30}}
\put(25,0){\line(0,1){30}}
\put(30,0){\line(0,1){30}}
\linethickness{0.7 mm}
\put(0,0){\line(0,1){15}}
\put(0,15){\line(1,0){5}}
\put(5,15){\line(0,1){5}}
\put(5,20){\line(1,0){15}}
\put(20,20){\line(0,1){10}}
\put(20,30){\line(1,0){10}}
\end{picture}}
\circ\;\;
\parbox{32mm}{\begin{picture}(30,32)                
\definecolor{rulecolor}{named}{lightgray}
\put(0,0){\rule{30mm}{20mm}}
\put(0,0){\rule{15mm}{30mm}}
\definecolor{rulecolor}{named}{gray}
\put(0,0){\rule{30mm}{5mm}}
\put(15,0){\rule{5mm}{30mm}}
\multiput(0,5)(1,1){15}{.}
\put(0,0){\line(1,0){30}}  
\put(0,5){\line(1,0){30}}
\put(0,10){\line(1,0){30}}
\put(0,15){\line(1,0){30}}
\put(0,20){\line(1,0){30}}
\put(0,25){\line(1,0){30}}
\put(0,30){\line(1,0){30}}
\put(0,0){\line(0,1){30}}
\put(5,0){\line(0,1){30}}
\put(10,0){\line(0,1){30}}
\put(15,0){\line(0,1){30}}
\put(20,0){\line(0,1){30}}
\put(25,0){\line(0,1){30}}
\put(30,0){\line(0,1){30}}
\linethickness{0.7 mm}
\put(0,0){\line(0,1){15}}
\put(0,15){\line(1,0){5}}
\put(5,15){\line(0,1){5}}
\put(5,20){\line(1,0){15}}
\put(20,20){\line(0,1){10}}
\put(20,30){\line(1,0){10}}
\end{picture}}$}

\vskip 1ex

\noindent
The row and column removed in steps 1.\ and 3.\ are shown in darker gray.
The north-east diagonal is dotted.
\end{rmk}

\begin{rmk}\label{Dyck-bullet}
  We formulate the $\bullet$-decomposition in terms of
  Dyck paths. 
  \begin{enumerate}\item  Remove the lowest row of the lattice.
\item     Draw the
  north-east diagonal from the end point of the very first step east.
  Let $F$ be the first step east which goes below this north-east diagonal.

\item  Remove the column containing $F$. 

\item The
  left $\bullet$-factor is the Dyck path in the lattice obtained by
  deleting (in addition to steps 1.\ and 3.)  all rows and columns
  shared by the part of the north-east diagonal left of $F$. 

\item  The right
  $\bullet$-factor is the Dyck path in the lattice obtained by
  retaining only the rows and columns
  shared by the part of the north-east diagonal left of $F$.
\end{enumerate}
If the resulting lattice in one of the factors is empty this
  corresponds to the Catalan tuple $(0)$ of length $0$.
  For example, the decomposition $(3,1,0,0,2,0,0)=(2,0,2,0,0)\bullet
  (1,0)$ visualises as
\\ \centerline{$
  \parbox{32mm}{\begin{picture}(30,32)                
\put(0,0){\line(1,0){30}}
\put(0,5){\line(1,0){30}}
\put(0,10){\line(1,0){30}}
\put(0,15){\line(1,0){30}}
\put(0,20){\line(1,0){30}}
\put(0,25){\line(1,0){30}}
\put(0,30){\line(1,0){30}}
\put(0,0){\line(0,1){30}}
\put(5,0){\line(0,1){30}}
\put(10,0){\line(0,1){30}}
\put(15,0){\line(0,1){30}}
\put(20,0){\line(0,1){30}}
\put(25,0){\line(0,1){30}}
\put(30,0){\line(0,1){30}}
\linethickness{0.7 mm}
\put(0,0){\line(0,1){15}}
\put(0,15){\line(1,0){5}}
\put(5,15){\line(0,1){5}}
\put(5,20){\line(1,0){15}}
\put(20,20){\line(0,1){10}}
\put(20,30){\line(1,0){10}}
\end{picture}}
=\;\;
\parbox{32mm}{\begin{picture}(30,32)                
\definecolor{rulecolor}{named}{lightgray}
\put(0,15){\rule{30mm}{5mm}}    
\put(5,0){\rule{10mm}{30mm}}
\definecolor{rulecolor}{named}{gray}
\put(0,0){\rule{30mm}{5mm}}
\put(10,0){\rule{5mm}{30mm}}
% \put(0,20){\rule{30mm}{10mm}}
\multiput(5,15)(1,1){5}{.}
\put(0,0){\line(1,0){30}}  
\put(0,5){\line(1,0){30}}
\put(0,10){\line(1,0){30}}
\put(0,15){\line(1,0){30}}
\put(0,20){\line(1,0){30}}
\put(0,25){\line(1,0){30}}
\put(0,30){\line(1,0){30}}
\put(0,0){\line(0,1){30}}
\put(5,0){\line(0,1){30}}
\put(10,0){\line(0,1){30}}
\put(15,0){\line(0,1){30}}
\put(20,0){\line(0,1){30}}
\put(25,0){\line(0,1){30}}
\put(30,0){\line(0,1){30}}
\linethickness{0.7 mm}
\put(0,0){\line(0,1){15}}
\put(0,15){\line(1,0){5}}
\put(5,15){\line(0,1){5}}
\put(5,20){\line(1,0){15}}
\put(20,20){\line(0,1){10}}
\put(20,30){\line(1,0){10}}
\end{picture}}
\bullet\;\;
\parbox{32mm}{\begin{picture}(30,32)                
\definecolor{rulecolor}{named}{lightgray}
\put(0,0){\rule{30mm}{15mm}}
\put(0,20){\rule{30mm}{10mm}}    
\put(0,0){\rule{5mm}{30mm}}
\put(15,0){\rule{15mm}{30mm}}
\definecolor{rulecolor}{named}{gray}
\put(0,0){\rule{30mm}{5mm}}
\put(10,0){\rule{5mm}{30mm}}
\multiput(5,15)(1,1){5}{.}
\put(0,0){\line(1,0){30}}  
\put(0,5){\line(1,0){30}}
\put(0,10){\line(1,0){30}}
\put(0,15){\line(1,0){30}}
\put(0,20){\line(1,0){30}}
\put(0,25){\line(1,0){30}}
\put(0,30){\line(1,0){30}}
\put(0,0){\line(0,1){30}}
\put(5,0){\line(0,1){30}}
\put(10,0){\line(0,1){30}}
\put(15,0){\line(0,1){30}}
\put(20,0){\line(0,1){30}}
\put(25,0){\line(0,1){30}}
\put(30,0){\line(0,1){30}}
\linethickness{0.7 mm}
\put(0,0){\line(0,1){15}}
\put(0,15){\line(1,0){5}}
\put(5,15){\line(0,1){5}}
\put(5,20){\line(1,0){15}}
\put(20,20){\line(0,1){10}}
\put(20,30){\line(1,0){10}}
\end{picture}}$}

\vskip 1ex

\noindent
The row and column removed in steps 1.\ and 3.\ are shown in darker gray.
The north-east diagonal is dotted.
\end{rmk}

\begin{exm} We have 
\begin{align*}
\hspace*{-1.4cm}
\mathcal{T}_1 &=\{\langle (0),(0)\rangle \}\;, 
\\
\hspace*{-1.4cm}\mathcal{T}_2 &=\{\langle (1,0),(0),(0)\rangle ,~ \langle (0),(1,0),(0)\rangle
\}
\\
\hspace*{-1.4cm}\mathcal{T}_3 &=\{\langle (2,0,0),(0),(0),(0)\rangle ,~
\langle (1,1,0),(0),(0),(0)\rangle ,~
\langle (1,0),(1,0),(0),(0)\rangle ,
\\
&\qquad \langle (1,0),(0),(1,0),(0)\rangle ,~
\langle (0),(2,0,0),(0),(0)\rangle ,~
\langle (0),(1,1,0),(0),(0)\rangle , 
\\
&\qquad \langle (0),(1,0),(1,0),(0)\rangle \}\;.
\end{align*}
Later in Fig.~\ref{f:G4} and \ref{f:G6} we give a diagrammatic representation
of the nested Catalan tables in $\mathcal{T}_2$ and $\mathcal{T}_3$,
respectively. 
\end{exm} 

\begin{exm}
We have $\langle(2,0,0),(0),(0),(0)\rangle=
\langle(1,0),(0),(0)\rangle \smalllozenge \langle(0),(0)\rangle$ and
$\langle(1,1,0),(0),(0),(0)\rangle=\langle(0),(0)\rangle\smalllozenge 
\langle(1,0),(0),(0)\rangle$. In Ex.~\ref{ex:G12a} and 
Figure~\ref{f:G12a} 
we considered the nested Catalan table
$\langle (2,0,0),(1,1,0),(0),(0),(0),(1,0),(0)\rangle
= \langle (1,0),(1,1,0),(0),(0),(0)\rangle 
\smalllozenge \langle (0),(1,0),(0)\rangle$.
Another example will be given in Ex.~\ref{exm:triangle}.
\end{exm}

\begin{exm}
We have $\langle(0),(2,0,0),(0),(0)\rangle=
\langle(0),(1,0),(0)\rangle \smallblacklozenge \langle(0),(0)\rangle$ and
$\langle(0),(1,1,0),(0),(0)\rangle=\langle(0),(0)\rangle\smallblacklozenge 
\langle(1,0),(0),(0)\rangle$. In Ex.~\ref{ex:G12a} and 
Figure~\ref{f:G12a} 
we considered the nested Catalan table
$\langle (2,0,0),(1,1,0),(0),(0),(0),(1,0),(0)\rangle
= \langle (2,0,0),(0), (0), (1,0),(0)\rangle 
\smallblacklozenge \langle (1,0),(0),(0)\rangle$.
Another example will be given in Ex.~\ref{exm:box}.
\end{exm}

\section{Chord diagrams with threads\label{sec:CD}}

For uncovering the combinatorial structure 
of (\ref{e:rr}) it was extremely helpful for us to have a 
graphical presentation as diagrams of chords and threads. 
To every term 
of the expansion (\ref{expansion}) of an $N$-point function we 
associate a diagram as follows: 
\begin{dfnt}[diagrammatic presentation]
Draw $N$ nodes on a circle, label them
from $b_0$ to $b_{N-1}$. Draw a (grey solid) chord 
between $b_r,b_s$ for every
factor $G_{b_rb_s}$ in (\ref{expansion}) and a (short-dashed
for $t,u$ even, long-dashed for $t,u$ odd) 
thread between 
$b_t,b_u$ for every factor $\frac{1}{E_{b_t}-E_{b_u}}$. 
The convention $t<u$ is chosen so that the diagrams come with a sign.
\end{dfnt}
\noindent 
It was already known in \cite{Grosse:2012uv} that the chords do not cross each
other (using cyclic invariance (\ref{e:rr_rot})) and that the
threads do not cross the chords (using  (\ref{rfp})). But the
combinatorial structure was not understood in \cite{Grosse:2012uv} and no
algorithm for a canonical set of chord diagrams could be given. The present
paper repairs this omission.

The $N/2=k+1$ chords in such a diagram divide the circle into $k+2$
pockets. The pocket which contain the arc segment between the
designated nodes $b_{0}$ and $b_{N-1}$ is by definition the root 
pocket $P_{0}$. Moving in the
counter-clockwise direction, every time a new pocket is entered it is
given the next number as index, as in Definition \ref{dfnt:CTG}. The
tree of these $k+2$ pockets, connecting vertices if the pockets border
each other, is the pocket tree. A pocket is called even (resp.\ odd)
if its index is even (resp.\ odd).

Inside  every even pocket, the short-dashed threads (between even nodes) 
form the direct tree, the long-dashed threads (between odd nodes) 
form the opposite tree. 
Inside every odd pocket, the short-dashed threads (between even nodes) 
form the opposite tree, the long-dashed threads (between odd nodes) 
form the direct tree. 

The sign $\tau$ of the diagram is given by 
\begin{equation}
\tau(T)=(-1)^{\sum_{j=1}^{k+1}e_{0}^{(j)}}\;,\label{e:sign}
\end{equation}
where $e^{(j)}_{0}$ is the first entry of the Catalan tuple 
corresponding to a pocket $P_{j}$. Indeed, 
for every pocket that is not a leaf or the root pocket, the chain of 
odd nodes starts with the highest index, which implies that every 
thread emanating from this node contributes a factor $(-1)$ to the
monomial (\ref{expansion}) compared with the lexicographic order
chosen there. In words: count for all pockets
other than the root pocket the total number $K$ of threads which go from
the smallest node into the pocket. The sign is even (resp.\ odd) if $K$ is
even (resp.\ odd). 

Figure \ref{f:G4} and \ref{f:G6} show nested Catalan tables and chord
diagrams of the $4$-point function and $6$-point function,
respectively.  Figure \ref{f:G12a} shows the chord diagram discussed
in Example~\ref{ex:G12a}.
\begin{figure}[!hpt]
\begin{picture}(120,30)\setlength{\unitlength}{1mm}
\put(25,3){\includegraphics[width=3cm]{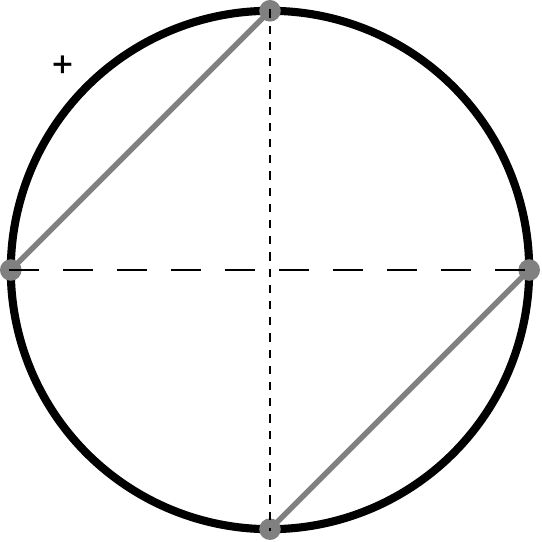}}
\put(65,3){\includegraphics[width=3cm]{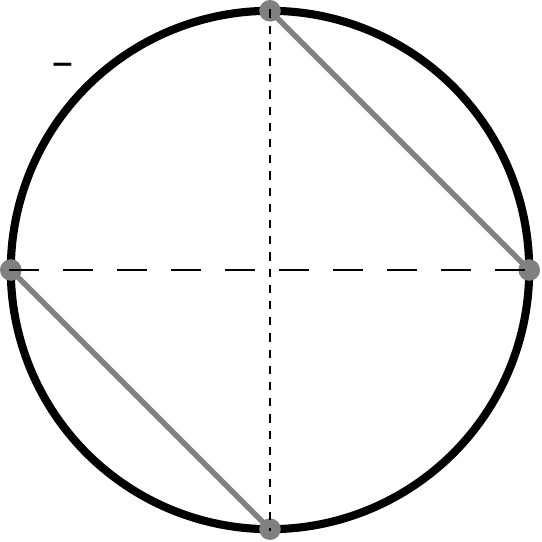}}
\put(30,0){\mbox{\scriptsize$\langle (1,0),(0),(0)\rangle$}}
\put(70,0){\mbox{\scriptsize$\langle (0),(1,0),(0)\rangle$}}
\end{picture}
\caption{\noindent The two chord diagrams and nested Catalan tables 
of $G^{(0)}_{b_{0}b_{1}b_{2}b_{3}}$.\label{f:G4}}
\end{figure}
\begin{figure}[!hpt]
\begin{picture}(120,108)\setlength{\unitlength}{1mm}
\put(35,77){\includegraphics[width=3cm]{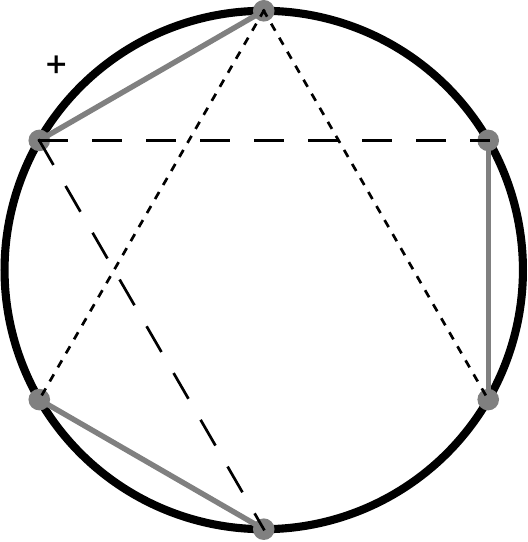}}
\put(70,77){\includegraphics[width=3cm]{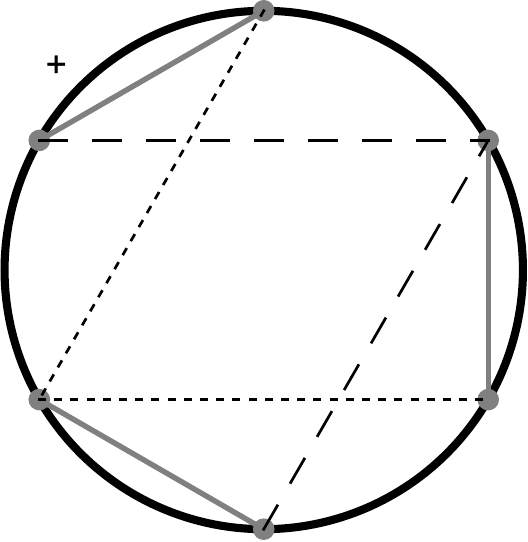}}
\put(35,74){\mbox{\scriptsize$\langle (2,0,0),(0),(0),(0)\rangle$}}
\put(70,74){\mbox{\scriptsize$\langle (1,1,0),(0),(0),(0)\rangle$}}
\put(35,40){\includegraphics[width=3cm]{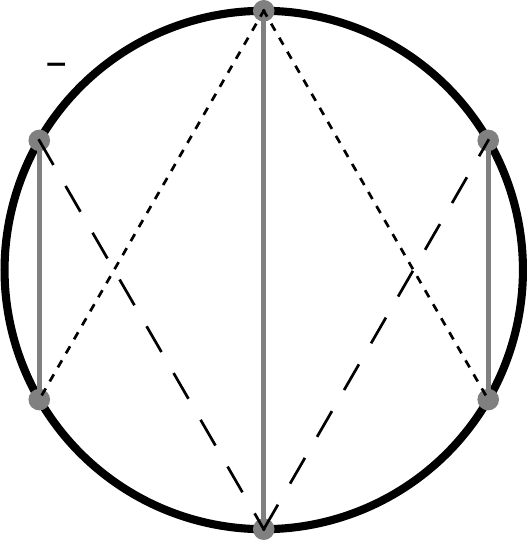}}
\put(70,40){\includegraphics[width=3cm]{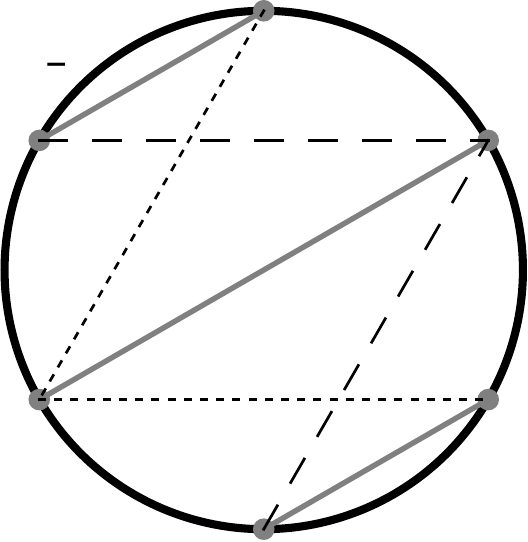}}
\put(35,37){\mbox{\scriptsize$\langle (1,0),(1,0),(0),(0)\rangle$}}
\put(70,37){\mbox{\scriptsize$\langle (1,0),(0),(1,0),(0)\rangle$}}
\put(17.5,3){\includegraphics[width=3cm]{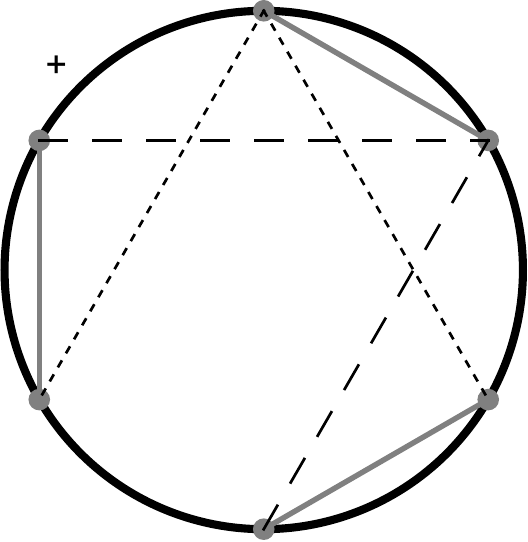}}
\put(52.5,3){\includegraphics[width=3cm]{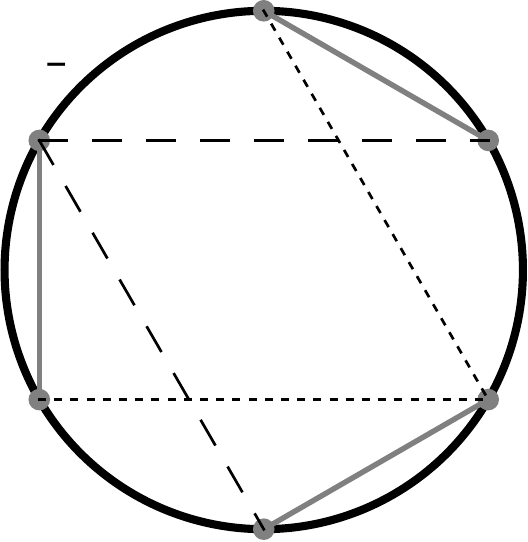}}
\put(87.5,3){\includegraphics[width=3cm]{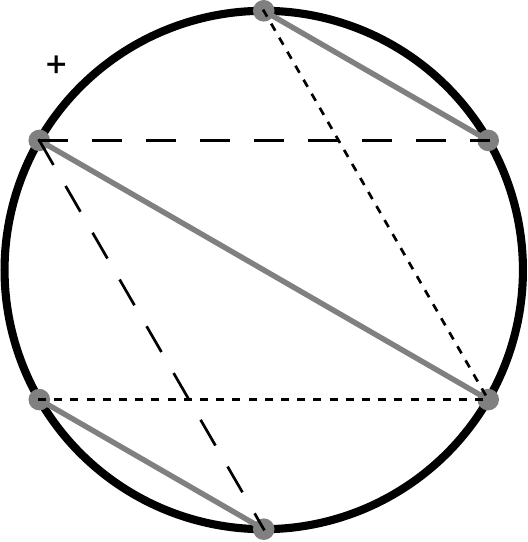}}
\put(17.5,0){\mbox{\scriptsize$\langle (0),(2,0,0),(0),(0)\rangle$}}
\put(52.5,0){\mbox{\scriptsize$\langle (0),(1,1,0),(0),(0)\rangle$}}
\put(87.5,0){\mbox{\scriptsize$\langle (0),(1,0),(1,0),(0)\rangle$}}
\end{picture}
\caption{The seven chord diagrams and nested Catalan tables of 
$G^{(0)}_{b_{0}b_{1}b_{2}b_{3}b_{4}b_{5}}$.\label{f:G6}}
\end{figure}
\begin{figure}[!hpt]
\begin{picture}(120,50)
\put(0,0){\includegraphics[width=5cm]{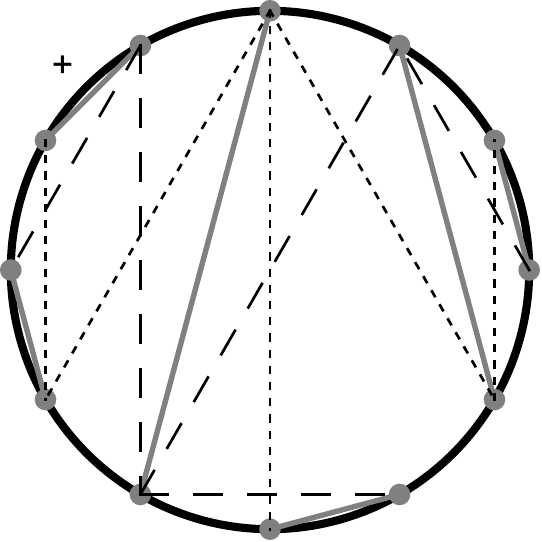}}
\put(60,20){\mbox{\footnotesize
$\langle(2,0,0),(1,1,0),(0),(0),(0),(1,0),(0)\rangle$}}
\end{picture}
\caption{A chord diagram and nested Catalan table contributing to 
  a planar $12$-point function $G^{(0)}_{b_0...b_{11}}$. Pocket tree and
  all non-trivial direct and opposite trees
have been given in Example~\ref{ex:G12a}.\label{f:G12a}}
\end{figure}

Now that a visual way to study the recursion relation (\ref{e:rr}) has
been introduced, it is much easier to demonstrate the concepts
introduced in Secs.~\ref{sec:CT} and \ref{sec:Ctab}.
\begin{exm}\label{exm:triangle}
  The operation $\smalllozenge\,$ is best demonstrated by an
  example:
\begin{equation*}
\langle(1,0),(0),(0)\rangle \smalllozenge 
\langle(0),(1,0),(0)\rangle=\langle(2,0,0),(0),(0),(1,0),(0)\rangle\;.
\end{equation*}
The corresponding chord diagrams are 
\[
\begin{picture}(120,32)
\put(0,3){\includegraphics[width=3cm]{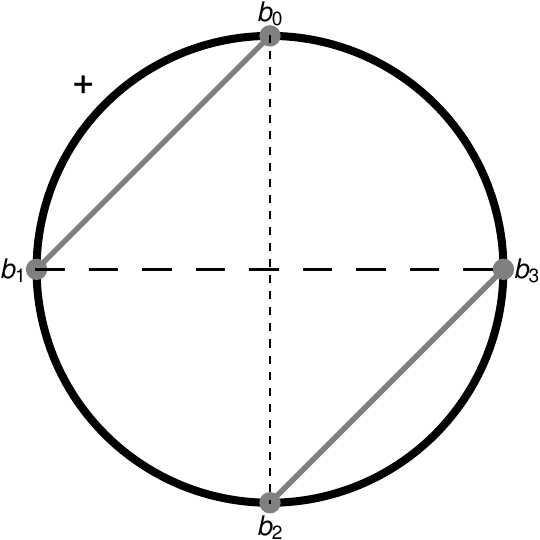}}
\put(34,16){\mbox{$\lozenge$}}
\put(40,3){\includegraphics[width=3cm]{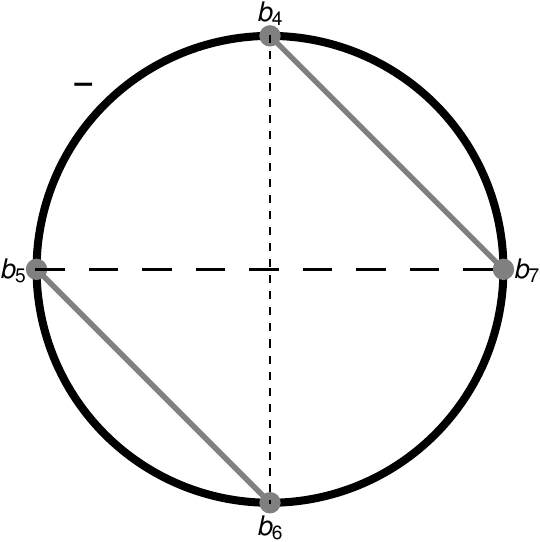}}
\put(78,16){\mbox{$=$}}
\put(87,3){\includegraphics[width=3cm]{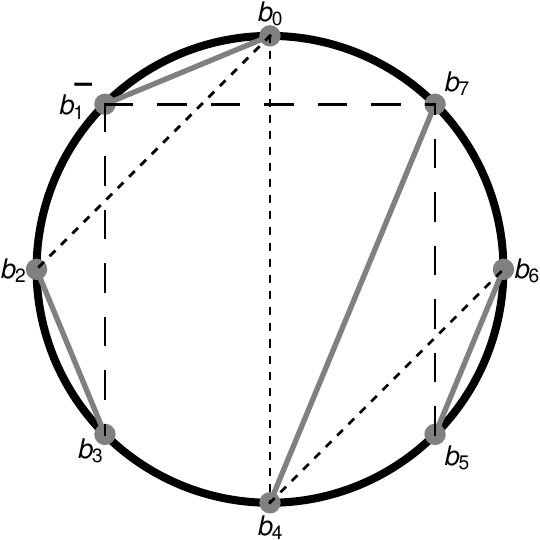}}
\put(3,0){\mbox{\scriptsize$\langle (1,0),(0),(0)\rangle$}}
\put(43,0){\mbox{\scriptsize$\langle (0),(1,0),(0)\rangle$}}
\put(85,0){\mbox{\scriptsize$\langle (2,0,0),(0),(0),(1,0),(0)\rangle$}}
\end{picture}
\]
The diagrammatic recipe is to cut both diagrams on the right side of
the designated node and paste the second into the first, where the
counter-clockwise order of the nodes must be preserved. Then both
designated nodes (here $b_0,b_4$) are connected by a short-dashed 
thread and nodes $b_{1}$
and $b_7=b_{N-1}$ by a long-dashed thread.

To $\smalllozenge$-decompose the nested Catalan table $\langle
(2,0,0),(0),(0),(1,0),(0)\rangle$, we first $\circ$-factorise 
the zeroth pocket $(2,0,0)$ via (\ref{circ-factor}).
Here $\sigma_{1}\big((2,0,0)\big)=1$ and, hence, $(2,0,0)=(1,0)\circ
(0)$. Next, we evaluate the number $\hat{k}$ defined in 
(\ref{k-lozenge}). We have 
$1+|\tilde{f}^{(0)}|=1$ and
$\sigma_{1}\big((3,0,0,1,0)\big)=2$. Consequently, 
we get from Definition~\ref{dfnt:triangle}
\begin{equation*}
\langle(2,0,0),(0),(0),(1,0),(0)\rangle=\langle (1,0),(0),(0)\rangle \smalllozenge \langle(0),(1,0),(0)\rangle\;.
\end{equation*}
\end{exm}

\begin{exm}\label{exm:box}
We employ the same example (with diagrams switched) to demonstrate 
the operation $\smallblacklozenge\,$. 
In terms of nested Catalan tables this becomes
\begin{equation*}
\langle(0),(1,0),(0)\rangle \smallblacklozenge 
\langle(1,0),(0),(0)\rangle=\langle(0),(2,1,0,0),(0),(0),(0)\rangle\;,
\end{equation*}
for which the chord diagrams are
\[
\begin{picture}(120,32)
\put(0,3){\includegraphics[width=3cm]{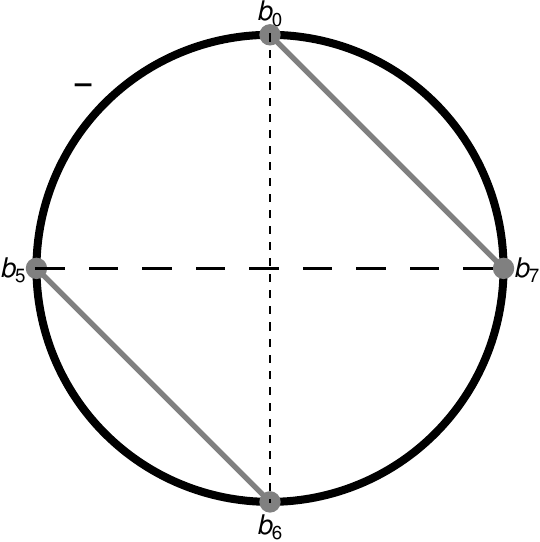}}
\put(34,16){\mbox{$\blacklozenge$}}
\put(40,3){\includegraphics[width=3cm]{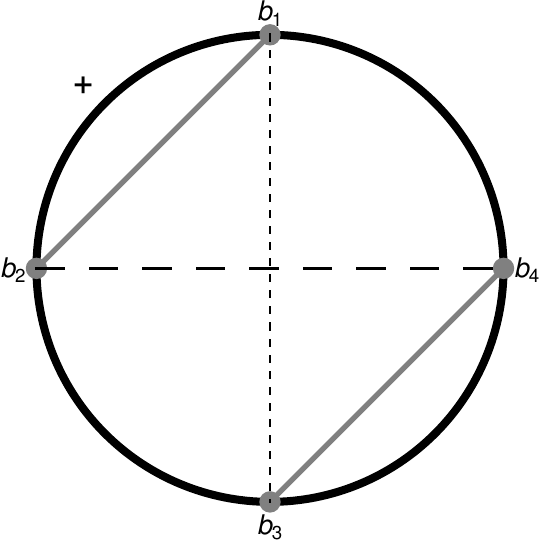}}
\put(78,16){\mbox{$=$}}
\put(87,3){\includegraphics[width=3cm]{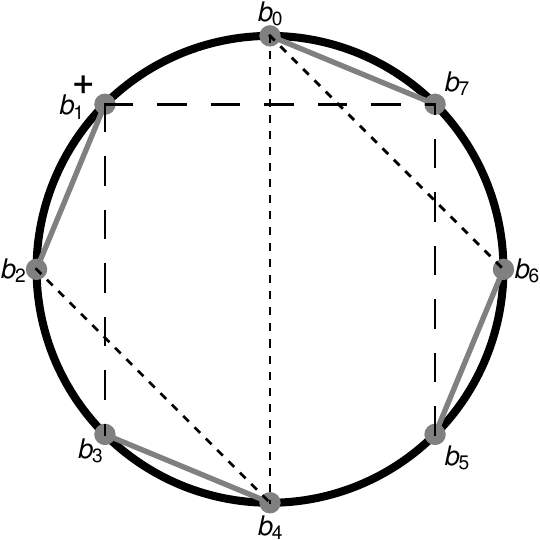}}
\put(3,0){\mbox{\scriptsize$\langle (0),(1,0),(0)\rangle$}}
\put(43,0){\mbox{\scriptsize$\langle (1,0),(0),(0)\rangle$}}
\put(85,0){\mbox{\scriptsize$\langle (0),(2,1,0,0),(0),(0),(0)\rangle$}}
\end{picture}
\]
The diagrammatic recipe is to cut the first diagram on the
left side of the designated node and the second diagram on the right
side. Then paste the second into the first,
where the counter-clockwise order of the nodes must be preserved. The
threads in the second diagram switch long/short doing so. Then, the
designated node of the first diagram is connected to the last node of
the second by a short-dashed thread, the designated node of the second
diagram is connected to the last node of the first diagram by a long-dashed
thread. 

Conversely, to $\smallblacklozenge$-decompose the nested Catalan table $\langle
(0),(2,1,0,0),(0),(0),(0)\rangle$, 
we first $\bullet$-factorise 
the first pocket $e^{(1)}=(2,1,0,0)$ via (\ref{bullet-factor}).
We have $e_{0}^{(1)}-1=1$, hence consider 
$\sigma_{1}\big((2,1,0,0)\big)=2$ and conclude 
$(2,1,0,0)=(1,0)\bullet (1,0)$. 
Next, we evaluate the number $\hat{l}$ in
(\ref{l-blacklozenge}). With 
$|\tilde{e}^{(0)}|+|\tilde{e}^{(1)}|+1=0+1+1=2$
the decomposition follows from $\sigma_{2}\big((1,3,0,0,0)\big)=2$ 
and yields 
\begin{equation*}
\langle (0),(2,1,0,0),(0),(0),(0)\rangle 
=\langle (0),(1,0),(0)\rangle \smallblacklozenge 
\langle(1,0),(0),(0)\rangle \;.
\end{equation*}
\end{exm}

\end{appendix}

\begin{ack}
We are grateful to an anonymous referee for an
exceptionally comprehensive report which contained numerous suggestions that 
improved this paper.
\end{ack}

\begin{funding}
This work was supported by the Deutsche Forschungsgemeinschaft via 
SFB 878 and the Cluster of Excellence\footnote{``Gef\"ordert
  durch die Deutsche Forschungsgemeinschaft (DFG) im Rahmen der
  Exzellenz\-strategie des Bundes und der L\"ander EXC 2044--390685587,
  Mathematik M\"unster: Dynamik--Geometrie--Struktur"} ``Mathematics
M\"unster''.
\end{funding}

%------
% Insert the bibliography.
%\bibliographystyle{emss}
%\bibliography{Recursion_relation.bib}{}

%

------

\end{document}